	\newtheorem{proposition}{Proposition}
	\newtheorem{definition}{Definition}
 	\newtheorem{lemma}{Lemma}
\begin{document}

\title{Modeling epidemics on $d$-cliqued graphs}

\author{Laura P. Schaposnik~$^{(a)(b)}$ and  Anlin Zhang~$^{(c)}$}

    \affiliation{(a) University of Illinois, Chicago, IL 60607, USA.\\~
    (b) Freie   Universit\"at Berlin, 14195 Berlin, Germany. 
\\~
    (c) Canyon Crest Academy, San Diego, CA 92130, USA.~}
%--------------------------------------------------------

% \affiliation[label1]{}

%\homepage{http://stoa.usp.br/thschiavo}

\begin{abstract}
%% Text of abstract
Since social interactions have been shown to lead to symmetric clusters,  we propose here that symmetries  play a key role in epidemic modeling.  Mathematical models on $d$-ary tree graphs were recently shown to be particularly effective for  modeling epidemics in simple networks [Seibold $\&$ Callender, 2016].  To account for symmetric relations, we generalize this to a new type of networks modeled on $d$-cliqued tree graphs, which are obtained by adding edges to regular $d$-trees to form $d$-cliques.  This setting gives a more realistic model for epidemic outbreaks originating, for example,  within a family or classroom and which could reach a population by transmission via children in schools.  Specifically, we quantify how an infection starting in a clique (e.g. family) can reach other cliques through the body of the graph (e.g. public places).    \end{abstract}

 \keywords{Epidemic dynamics; cliques; symmetric graphs.}
\maketitle

%%
%% Start line numbering here if you want
%%
%\linenumbers

%% main text

\section{Introduction}

The study of epidemic propagation in networks (social and biological) has been of much interest to biologists and mathematicians for a long time, but only recently have graph theory, number theory, and computer science taken researchers to several breakthroughs. Moreover, as highlighted in \cite{Pastor}, the importance of local clustering in networks has been widely recognised, and thus understanding the connectivity of the networks is of upmost importance for developing effective control measures (e.g., quarantine, vaccinations or specific treatments). 

The mathematical foundation for epidemiology can be traced to the early 1900s with the work of Ronald Ross: he produced the first mathematical model of mosquito-borne pathogen transmission using mosquito spatial movement in order to reduce malaria from an area \cite{Ross}. In the 1920s, Lowell Reed and Wade H. Frost developed the \textit{Reed-Frost model} \cite{Reed}, which improved on Ross's model by modeling how an epidemic behaves with respect to time. In 1927, William O. Kermack and Anderson G. McKendrick \cite{Kermack} created the \textit{SIR model}, which categorized people into the $3$ states {\it Susceptible}, {\it Infectious} and {\it Removed}. More recently, Matt J. Keeling and Ken T.D. Eames used contact networks to better represent a community \cite{Keeling}.  
The use of contact networks which are adapted to reflect certain particular characteristics of society has been of much use when doing mathematical modeling of epidemics. This modeling is performed by seeing the network as a graph where vertices represent individuals, and edges encode the interactions amongst people: 
two people, seen as vertices, are connected by an edge in the graph whenever they are related (and thus an interaction could exist)

Particular shapes of contact networks have been studied in recent years, from  grid contact network to represent fields and study fungal infections, to networks with regular $d$-ary tree structure,  used  to study SARS outbreaks in Hong Kong (e.g., see \cite{Riley}). Moreover,  in \cite{Seibold}  the authors  considered the implications of disease spread on perfect $d$-ary trees. Our work  builds on their research, extending their results to more general networks and index cases.

We dedicate this paper to study local clustering in networks and its impact on epidemic modeling by incorporating an important method: the analysis of the {\it symmetries} which networks have through the appearance of cliques.
 This perspective is inspired by the appearance of symmetric relations among members of subgraphs within a network. An overall question we aim at understanding is {\it  how an epidemic outbreak originating within a family or classroom could reach a population by transmission via children in schools}.  
 To answer these questions, we introduce a novel type of graphs, the {\it $d$-cliqued trees}, to study  epidemics in regular networks which contain symmetric clusters both through next generation models as well as through general discrete-time models, which allows one to assign probabilities of infection $P_{inf}$ and recovery $P_{rec}$ to track how an outbreak would affect the population modeled via our contact network.

\begin{figure}[h]
\centering
\subfigure{%
\resizebox*{6cm}{!}{\includegraphics{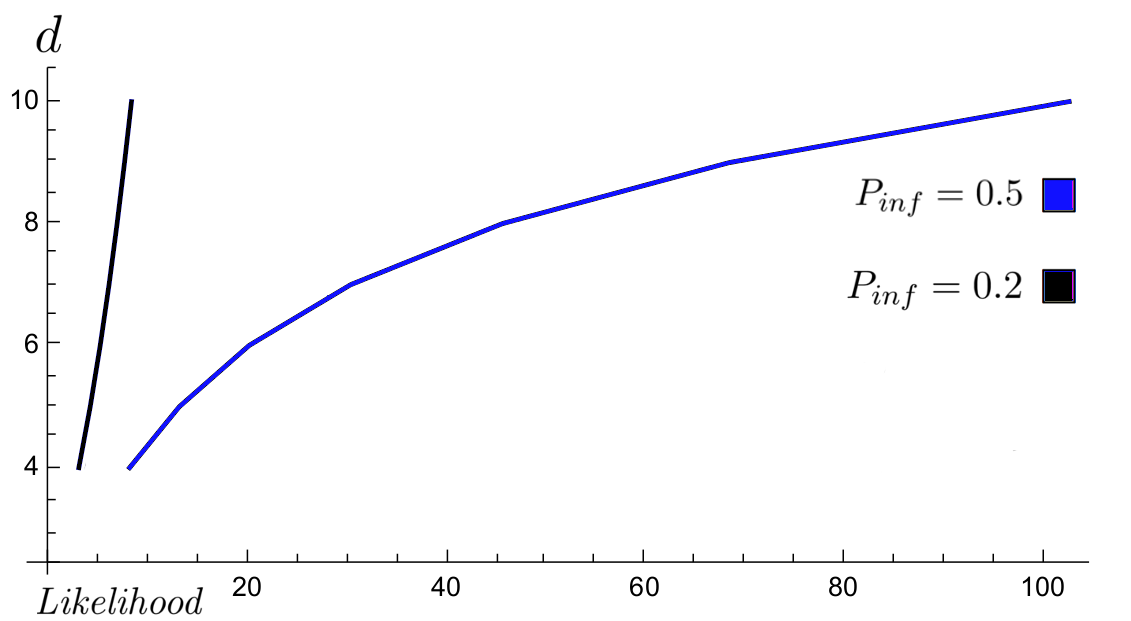}}}
\caption{The quotient of the probability of an outbreak lasting two times in our model divided by the same probability for the model in  \cite{Seibold}, more details in Figure \ref{like}.} \label{like1}
\end{figure}

\vspace{-0.5 cm}

Through $d$-cliqued trees $\mathcal{A}_d^\lambda$   of degree $d$ and height $\lambda$ introduced in Definition \ref{defi1}, in what follows we investigate how the outbreak duration of an epidemic originated within a classroom propagates through society depending on the size of the contact network considered (the height $\lambda$), as well as on the size of the classroom (the degree $d$). 
 Since we aim at highlighting the importance of symmetries to approach these realistic settings, we will  elaborate on how our probabilities $P_i$ of outbreaks lasting $i$ time units (ticks) differ from those in    \cite{Seibold} -- see, for example    Figure \ref{like1}.

\section{A model on $d$-ary trees}

Symmetries should appear in networks when considering clusters of vertices which share some common relation (for example, the members of a family and their  symmetric relations, or  cities connecting to a common airport). In 2016, researchers studied epidemic modeling on $d$-ary trees  \cite{Seibold} using the  IONTW platform: in this paper we present  a new approach to generalize the study to \textit{d-cliqued tree graphs}, given by regular trees with  added edges creating cliques, or symmetric and completely connected subnetworks. We dedicate this section to give  a brief overview of regular trees and $d$-ary trees in Section \ref{background}, and of the main results of     \cite{Seibold} in Section \ref{Seibold}.

\subsection{Background: regular trees and $d$-ary trees}\label{background}

In modeling epidemics on simple graphs, vertices denote  individuals,  and   edges the possible ways to transmit an infection from one individual to another.  Following the notation of  \cite{Seibold}, we refer to the \textit{index node}  $I$ of the graph as the vertex where  the infection originates.  
\begin{definition}
A \textbf{tree} is an undirected graph in which any two vertices are connected by exactly one path. 
A \textbf{$d$-regular tree} is a tree graph for which  all non-terminal vertices have degree $d$ (i.e., have $d$ adjacent vertices), and the terminal ones have degree 1. The degree  of a $d$-regular tree  is $d$. 
\end{definition}

A {\it rooted tree} is a tree in which one vertex $x_0$ has been designated the {\it root}. In such trees, the {\it parent} of a vertex $v$ is the vertex connected to it on the path to the root $x_0$, and $v$ is  called a {\it child} of the parent vertex. The {\it height} is the length, in number of edges, from a terminal vertex to the root. We assume that all terminal vertices of the $d$-regular tree have the same height $\lambda\in \mathbb{N}$, and call this the {\it height} of the $d$-regular tree. The {\it network diameter}  $L$ of a $d$-regular tree is the number of edges in the longest path between two vertices. Regular trees are closely related to $d$-ary trees (Figure \ref{example2}):

\vspace{-5 mm}
 \begin{figure}[h]
\centering
\subfigure[A regular $3$-tree  with root the centermost vertex.]{%
\resizebox*{3.5 cm}{!}{\includegraphics{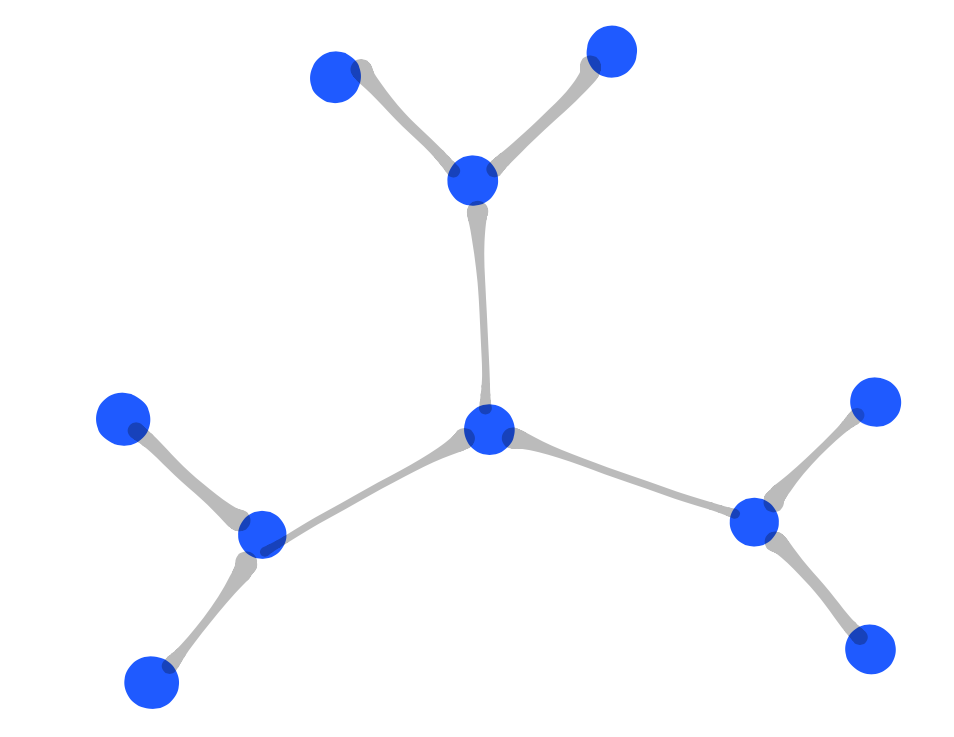}}}\hspace{15pt}
\subfigure[A $5-$ary tree with root the centermost vertex.]{%
\resizebox*{3.5 cm}{!}{\includegraphics{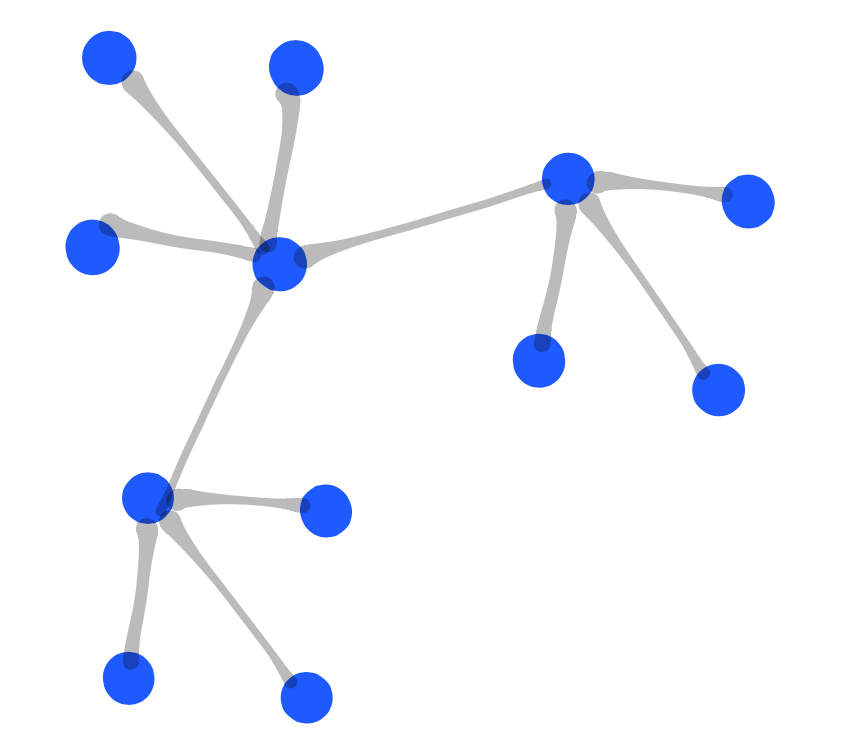}}}
\caption{Comparison of two types of trees. Figure (b) gives a network studied in \cite{Seibold}.} \label{example2} 
\end{figure}
\begin{definition} A \textbf{$d$-ary tree} is a rooted tree in which each node has at most $d$ children. A \textbf{perfect $d$-ary tree} is one where  non-terminal nodes have   $d$ children, and  terminal nodes have none. 
\end{definition}

 It is known that a $d$-regular tree on $n$ vertices exists if and only if $d - 1$ divides $n - 2$.  Finally, for any graph $G$, we denote by $d(v,w)$ the length of the shortest path between two vertices $v,w\in G$, which will become useful in later sections (e.g., for Lemma \ref{lemma10}).

\subsection{Outbreak modeling  through the IONTW platform}\label{Seibold}

The Infections On NeTWorks platform (IONTW) \cite{Just,Wilensky} is an agent-based modeling platform that shows theoretical predictions for disease spread and simulates disease transmission on  graphs (see Figure \ref{sample-figure}). In  \cite{Seibold}, the authors studied how outbreak duration of a given disease is affected by the size of a given contact network whose structure resembles perfect $d$-ary trees via IONTW. It is important to note that whilst the authors referred to the networks as ``regular tree graphs'', the notion of regular graph (where every vertex has the same incidence degree) was not considered. Hence, the correct terminology for their networks would seem to be indeed the one of ``perfect $d$-ary trees''.

\begin{figure}[h]
\centering
\subfigure{%
\resizebox*{8.2 cm}{!}{\includegraphics{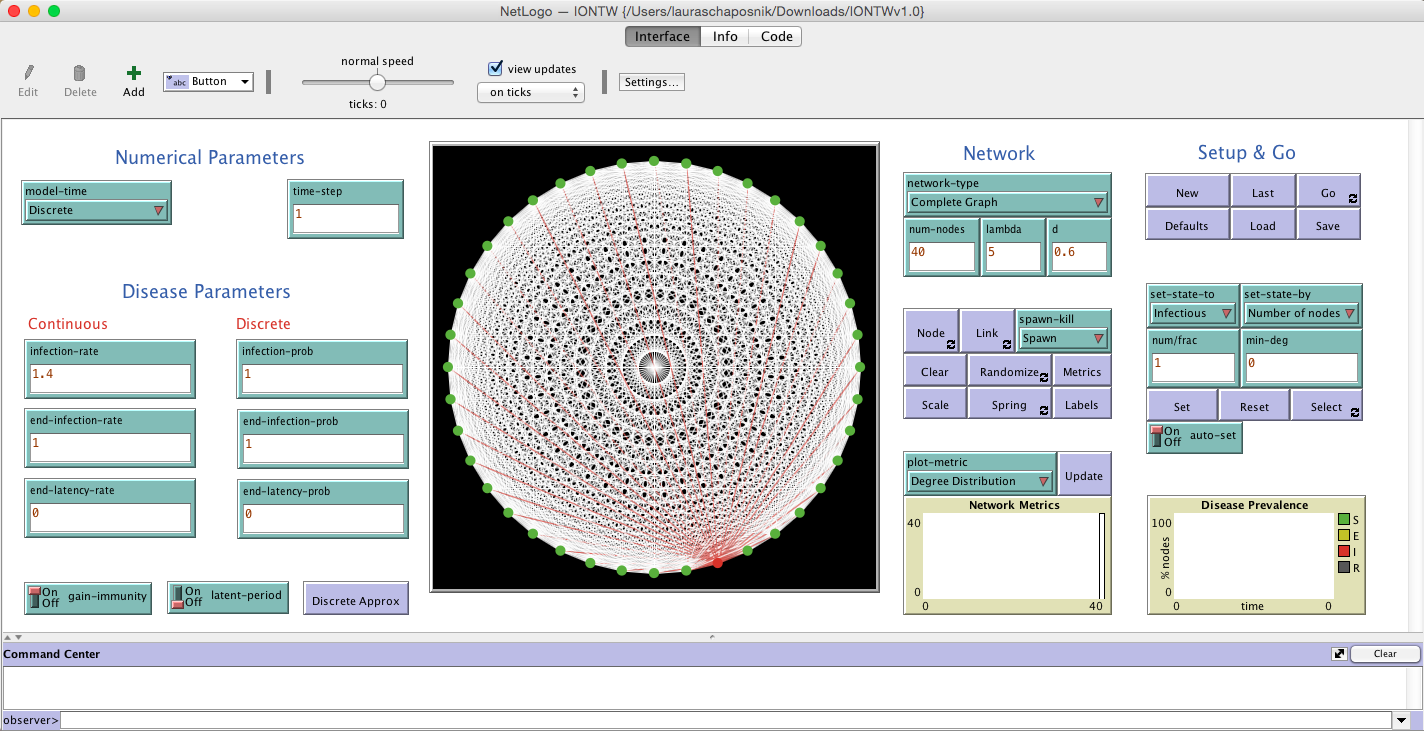}}}
\caption{Screenshot of the IONTW platform, which  can be used to model epidemic outbreaks for certain types of graphs. In the figure, a complete graph in 40 vertices.} \label{sample-figure}
\end{figure}

In  \cite{Seibold}, the authors initially assumed homogeneity of hosts, i.e., that the probabilities of infection \textbf{$P_{inf}$} are identical for all hosts in the population, as are the probabilities of recovery \textbf{$P_{rec}$}. The {\it outbreak duration} $\delta$ is the number of ticks, or units of time,  from the introduction of the infection in the index  node until all hosts have recovered. We will denote by $\delta_\lambda$ the expected average outbreak duration, and by $P_i$ the probability the outbreak lasts exactly $i$  ticks.

In a next generation model, one assumes   $P_{rec} = 1,$ i.e., infected hosts are moved to the recovered group at some time step, and  $P_{inf}=1,$ i.e., that  all nodes will eventually get infected. The lower bound for $\delta_\lambda$ on regular  tree graphs with $d>1$ is $\lambda+1$   when the   origin $I$ is placed in the root, since the maximum distance from there  to any node is $\lambda$, and we need an additional time step for the last node to recover.  Hence, in this case when the index is placed in the root, by definition of expected value, one has $\delta_\lambda:=\sum_{i=1}^{\lambda+1} i\cdot P_i.$
\pagebreak

 Finally, it was shown in  \cite{Seibold} that the probabilities $P_i$ of the duration spanning $i$ time units, when the origin $I$ is the root of a perfect $d$-ary tree,  is given by 
\begin{eqnarray}
P_1 &= & (1-P_{inf})^d\label{probabi1}\\
 P_2 &= &\sum_{r=1}^{d} \binom{d}{r} P_{inf}^{r} (1-P_{inf})^{d-r}(1-P_{inf})^{d\cdot r} \label{p2tree}\\
& \vdots &\nonumber\\
P_{\lambda + 1} &=& 1-(P_1+P_2+\cdots +P_{\lambda}).\label{probabi2}
\end{eqnarray}
For each $P_i$ term, the outer sum denotes nodes adjacent to the infection's origin $I$ which is placed in the tree's root, whilst the inner sum denotes nodes that are infected $i-1$ edges away from the infection's origin. Moreover, from  \cite{Seibold}, the outbreak duration $\delta$ is bounded as follows:  \begin{eqnarray}
 \lambda +1\leq &\delta& \leq 2\lambda+1~{\rm ~when~}~ d>1, \\
 \frac{\lambda}{2} \leq &\delta& \leq \lambda + 1~{\rm ~when~}~  d=1.
 \end{eqnarray}

Much of the work in  \cite{Seibold} was also dedicated to the study of the change a network went through when the degree $d$ or height $\lambda$ of the graph varied. In particular, through the IONTW platform, the authors obtained the following descriptions in Figure \ref{FigureOld}, which should be compared with the ones in Section \ref{masclique} which we obtain without the use of IONTW.

 \begin{figure}[h]
\centering
\subfigure[Number of nodes $N$ in a $d$-ary graph generated in IONTW for various values of $\lambda$ and $d$.]{%
\resizebox*{8.2cm}{!}{\includegraphics{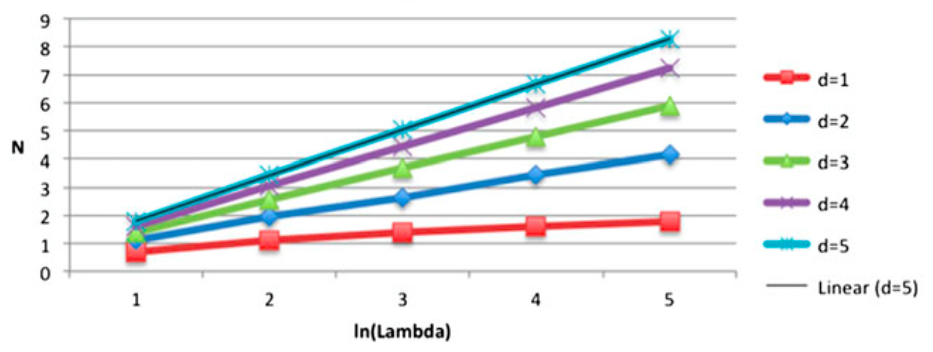}}}\hspace{5pt}
\subfigure[Semi-log plot of number of nodes $N$ in a $d$-ary tree graph generated in IONTW.]{%
\resizebox*{8.2cm}{!}{\includegraphics{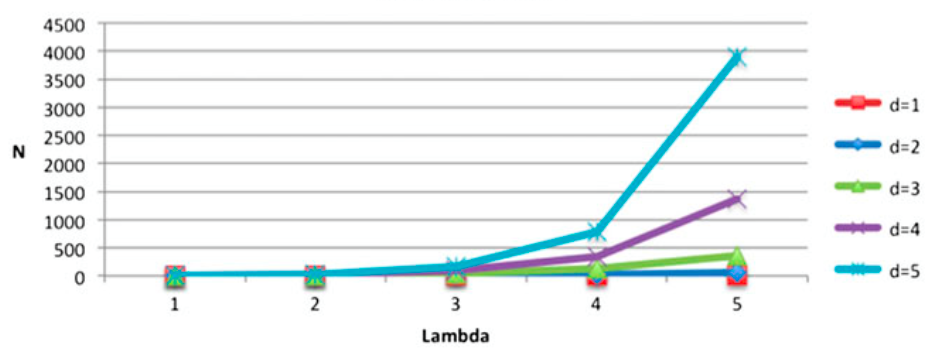}}}
\caption{Figures from  \cite{Seibold} obtained through IONTW on the network's  size variation.} \label{FigureOld} 
\end{figure}
Through the understanding of the probabilities $P_i$ the authors looked into the duration of outbreaks depending on the placement of the index node, as well as on the size of the network, although an explicit expression of $P_i$ was only given when $I$ was taken to be the root of the graph. Through a next-generatoin model where $P_{inf}$ is set to $1$, and when the index case is the root of the graph, the disease duration was described as a function of population size,  determined by varying values of the height $\lambda$  for degree $d=2$ networks as in Figure \ref{FigureOld2} (a), and for different values of $d$ with the same height $\lambda$ as in Figure \ref{FigureOld2} (b).  
 
 \begin{figure}[h]
\centering
\subfigure[Disease duration in terms of network size for randomly chosen index case using IONTW.]{%
\resizebox*{8.2cm}{!}{\includegraphics{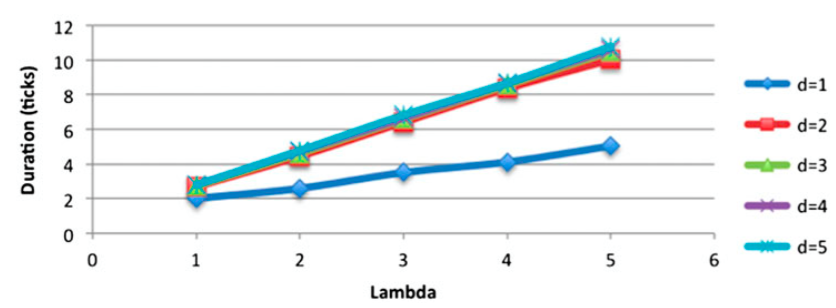}}}\hspace{5pt}
\subfigure[Disease duration in terms of network height.]{%
\resizebox*{8.2cm}{!}{\includegraphics{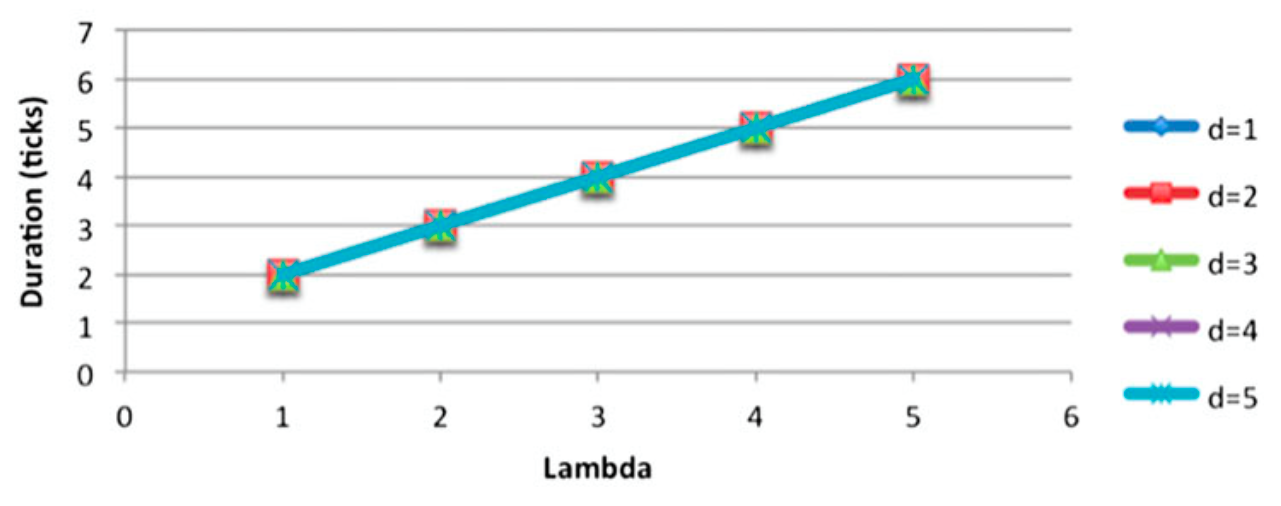}}}
\caption{Figures from  \cite{Seibold} obtained through IONTW  on disease durations.} \label{FigureOld2} 
\end{figure}

In particular  it was observed  that the duration will persist for $\lambda + 1$ ticks regardless of $d$ when the index case is located in the root of the graph  \cite{Seibold}, and   formulas for the mean duration $\delta_\lambda$ of a disease outbreak were provided for such case:
\begin{eqnarray}\label{mean}
\delta_\lambda=\left\{\begin{array}{lc}
\frac{1}{\sum_{i=0}^{\lambda}d^i}\sum_{j=0}^{\lambda}\left((d^j)({\rm Max}(\lambda-j,j)+1)\right)&d=1;\\
\frac{1}{\sum_{i=0}^{\lambda}d^i}\sum_{j=0}^{\lambda}\left((d^j)(\lambda+j+1)\right)&d>1.\end{array}\right. 
\end{eqnarray}

\section{Geometry of $d$-cliqued networks}\label{masclique}
 Our primary research question relates to how one can model epidemic outbreaks in networks that resemble family trees where the relations between siblings are taken into account. We shall begin by introducing these new types of networks, modeled on {\it $d$-cliqued graphs}, in Section \ref{dcliqued}. Since  we are interested in understanding how the duration of a given disease is affected by the size of  a {\it $d$-cliqued network}, we proceed in Sections \ref{scale}-\ref{diameter} to study how the scaling of  the  height $\lambda$ and  the degree $d$ affect the network. 

\subsection{A new type of network modeled on $d$-cliqued graphs}\label{dcliqued}

 Since symmetries in social networks can sometimes be modeled through cliques, in what follows we will  extend the work of   \cite{Seibold} to a more general setting of regular trees with added   cliques. 

\begin{definition}
A  \textbf{clique} is a subset of a graph such that all pairs of vertices in the subset are connected by an edge. A clique with $d$ vertices is called a $d$-clique.
\end{definition}
\begin{definition}
The \textbf{clique number} $\omega(G)$ of $G$ is the size of the largest clique in $G$, and a \textbf{maximum clique} is a clique with  $\omega(G)$ vertices.
\end{definition}

Cliques are especially useful in modeling epidemics, as they can represent completely related social groups (e.g. families or classrooms).  We will look at regular tree graphs with added cliques and analyze how duration, $P_{inf}$ and $P_{rec}$ can be expressed in terms of different variables.  
\begin{definition}\label{defi1}
A \textbf{$d$-cliqued tree graph } $\mathcal{A}_{d}^\lambda$ {\bf of height $\lambda$}  is a regular tree graph of degree $d$ and height $\lambda$, with added edges to form $d$-cliques on the terminal vertices.   The \textbf{body} of $\mathcal{A}_{d}^\lambda$  is  the set of  vertices which are not in any clique. 
\end{definition}

\begin{figure}[h]
\centering
\subfigure{%
\resizebox*{6cm}{!}{\includegraphics{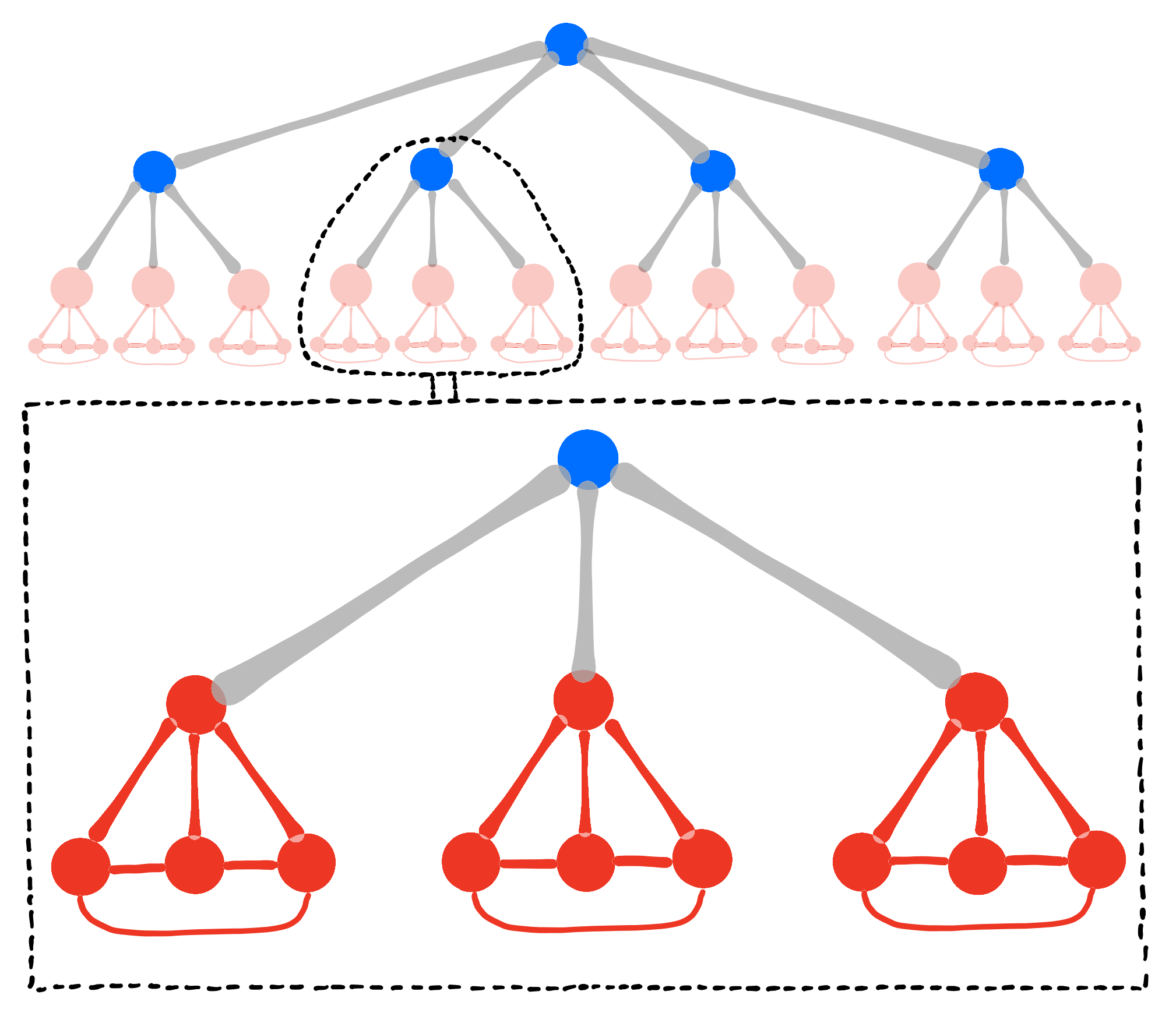}}}
\caption{Graph $\mathcal{A}^3_4$: a 4-regular tree  completed to have terminal  $4$-cliques (in red), and body vertices (in blue).} \label{hola}
\end{figure}

By construction, the number of $d$-cliques is $d (d-1)^{\lambda-2}$, and $\omega(\mathcal{A}_{d}^\lambda)=d$.  The graph $\mathcal{A}_d^{\lambda}$  can be separated into levels, where the root makes up level 0, adjacent nodes are contained in level 1, and so on, until the leafs make up level $\lambda$, the level where cliques are formed. 
  
\subsection{Size scaling with $\lambda$ and $d$}\label{scale}

Since $d$-cliqued networks are modeled on modifications of $d$-graphs to which one has added edges, the size of the network will depend on the height $\lambda$ and the degree $d$ of the underlying regular tree. In particular, we can show the following. 
\begin{proposition} The total number of edges of $\mathcal{A}_{d}^\lambda$ is 
\begin{align}
d^2(d-1)^{\lambda-2}\cdot\frac{d-2}{2}+\sum_{i=0}^{\lambda-1}d(d-1)^{i}.\label{edgenumber}
\end{align}
\end{proposition}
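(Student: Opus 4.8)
The plan is to partition the edge set of $\mathcal{A}_d^\lambda$ into two disjoint families and count each separately: the edges inherited from the underlying $d$-regular tree, and the edges added in order to complete the terminal $d$-cliques. These families are disjoint because every added edge joins two sibling terminal (level-$\lambda$) vertices, whereas every tree edge joins a vertex to its parent; hence the total number of edges is the sum of the two counts, matching the two summands of \eqref{edgenumber}.

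First I would count the tree edges. Organizing the vertices by level, the root is the unique vertex at level $0$, and since each non-terminal vertex has degree $d$, level $i$ contains exactly $d(d-1)^{i-1}$ vertices for $1\le i\le\lambda$. Because the underlying graph is a tree, each non-root vertex contributes exactly one edge to its parent, so the number of tree edges equals the number of non-root vertices, i.e. $\sum_{i=1}^{\lambda} d(d-1)^{i-1}=\sum_{i=0}^{\lambda-1} d(d-1)^{i}$. This is precisely the second summand of \eqref{edgenumber}; alternatively one counts the edges running between each pair of consecutive levels and sums the resulting geometric series.

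Next I would count the added edges. By the remark following Definition \ref{defi1}, the terminal cliques are indexed by the level-$(\lambda-1)$ vertices, so there are $d(d-1)^{\lambda-2}$ of them, each a copy of $K_d$. The edges contributed by one clique equal its $\binom{d}{2}$ total edges minus the tree edges already lying inside it, namely those from the level-$(\lambda-1)$ vertex to its children. Determining this per-clique constant correctly and multiplying it by the number $d(d-1)^{\lambda-2}$ of cliques yields the first summand $d^2(d-1)^{\lambda-2}\cdot\frac{d-2}{2}$ of \eqref{edgenumber}; adding the two contributions completes the proof.

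The step I expect to be the main obstacle is the bookkeeping in this last count: one must identify exactly which edges inside each terminal clique are already present as tree edges, so as not to double-count them, and one must check the small-height cases ($\lambda=1,2$), where the exponent $\lambda-2$ makes the clique count degenerate and the body and clique vertex sets can overlap. Fixing the convention that the body consists of levels $0$ through $\lambda-2$ while the cliques occupy levels $\lambda-1$ and $\lambda$ ensures every edge is classified exactly once, after which \eqref{edgenumber} follows from the two routine summations.
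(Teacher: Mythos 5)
Your decomposition is exactly the one the paper uses: split the edges of $\mathcal{A}_d^\lambda$ into the edges of the underlying $d$-regular tree and the edges added to complete the terminal cliques, count each family, and add. The tree-edge count is fine (each non-root vertex contributes its unique parent edge, giving $\sum_{i=0}^{\lambda-1}d(d-1)^i$), and your identification of the cliques --- one per level-$(\lambda-1)$ vertex, hence $d(d-1)^{\lambda-2}$ copies of $K_d$ --- agrees with the paper.

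The genuine gap is that you never perform the one computation on which the statement rests, the per-clique constant, and instead assert that ``determining this \ldots correctly'' yields the first summand. Carrying it out: for $\lambda\ge 2$ a clique consists of a level-$(\lambda-1)$ vertex together with its $d-1$ terminal children (this is forced by the paper's conventions, since $\omega(\mathcal{A}_d^\lambda)=d$ and the body stops at level $\lambda-2$), so the tree edges already inside a clique are the $d-1$ parent--child edges and the added edges per clique number $\binom{d}{2}-(d-1)=\tfrac{(d-1)(d-2)}{2}$. Multiplying by the number of cliques gives $\tfrac{d-2}{2}\,d\,(d-1)^{\lambda-1}$ added edges in total. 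This is precisely the quantity the paper's own proof writes as $\bigl(\tfrac{d(d-1)}{2}-(d-1)\bigr)\cdot d(d-1)^{\lambda-2}$, but it is \emph{not} equal to the first summand $d^2(d-1)^{\lambda-2}\cdot\tfrac{d-2}{2}$ of \eqref{edgenumber}: the two differ by a factor of $d/(d-1)$, and no convention for the cliques consistent with the rest of the paper produces the per-clique constant $\tfrac{d(d-2)}{2}$ that the displayed formula would require. So the step you flag as ``the main obstacle'' is exactly where the argument fails to close; you should either record the total as $\sum_{i=0}^{\lambda-1}d(d-1)^i+\tfrac{d-2}{2}\,d\,(d-1)^{\lambda-1}$ or explain where the missing factor is supposed to come from.
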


\begin{proof} 
Since the number of vertices of a $d$-regular tree of height $\lambda$ is 
$1+\sum_{i=0}^{\lambda-1}d(d-1)^i,$
 this is also the number of vertices of a $d$-cliqued tree  $\mathcal{A}_{d}^\lambda$. 
Moreover, the number of edges in a $d$-regular tree of height $\lambda$ is $\sum_{i=0}^{\lambda-1}d(d-1)^{i}$. Hence, since $d$-cliques have $d(d-1)/2$ edges,  the number of edges that one needs to add to a $d$-regular tree of height $\lambda$ to obtain the $d$-cliqued graph $\mathcal{A}_{d}^\lambda$ is 
$$\underbrace{  \left(\frac{d(d-1)}{2}-(d-1)\right)}_{{\rm edges~ missing~ to~ make~ a~ complete~}d-{\rm clique}}\cdot\overbrace{d(d-1)^{\lambda-2}}^{{\rm number~of~cliques~made}}$$ and thus the total number of edges of $\mathcal{A}_{d}^\lambda$ is 
as in   \eqref{edgenumber}. \end{proof}

\subsection{Clique and diameter scaling with $\lambda$}\label{diameter}

In what follows we shall study the dependence of the size of the network on the height $\lambda$ and the degree $d$ of the graphs $\mathcal{A}_{d}^\lambda$.  Note that from its definition, the number of vertices in the body of $\mathcal{A}_{d}^\lambda$  is $1+\sum_{i=0}^{\lambda-3}d(d-1)^i.$ In order to study the cliques of $\mathcal{A}_{d}^\lambda$, note that these have been formed through the outermost vertices, and one will have a clique for every vertex at the $\lambda-1$ level of the graph. Since the first level has $d$ vertices but subsequent ones increase by a factor of $(d-1)$, the number of cliques in the whole graph is $d(d-1)^{\lambda-2}.$ Then, one can understand the growth of the number of cliques in $\mathcal{A}_{d}^\lambda$ in terms of  $\lambda$ as shown in Figure \ref{holamas2} (a). 
 Moreover,  the number of edges in $\mathcal{A}_{d}^\lambda$ can also be seen in terms of   the degree $d$ and height $\lambda$ via  Eq. \eqref{edgenumber}, as illustrated in Figure \ref{holamas2} (b).

 \begin{figure}[h]
\centering
\subfigure[ ~The number of cliques in $\mathcal{A}_{d}^\lambda$.]{%
\resizebox*{8.2cm}{!}{\includegraphics{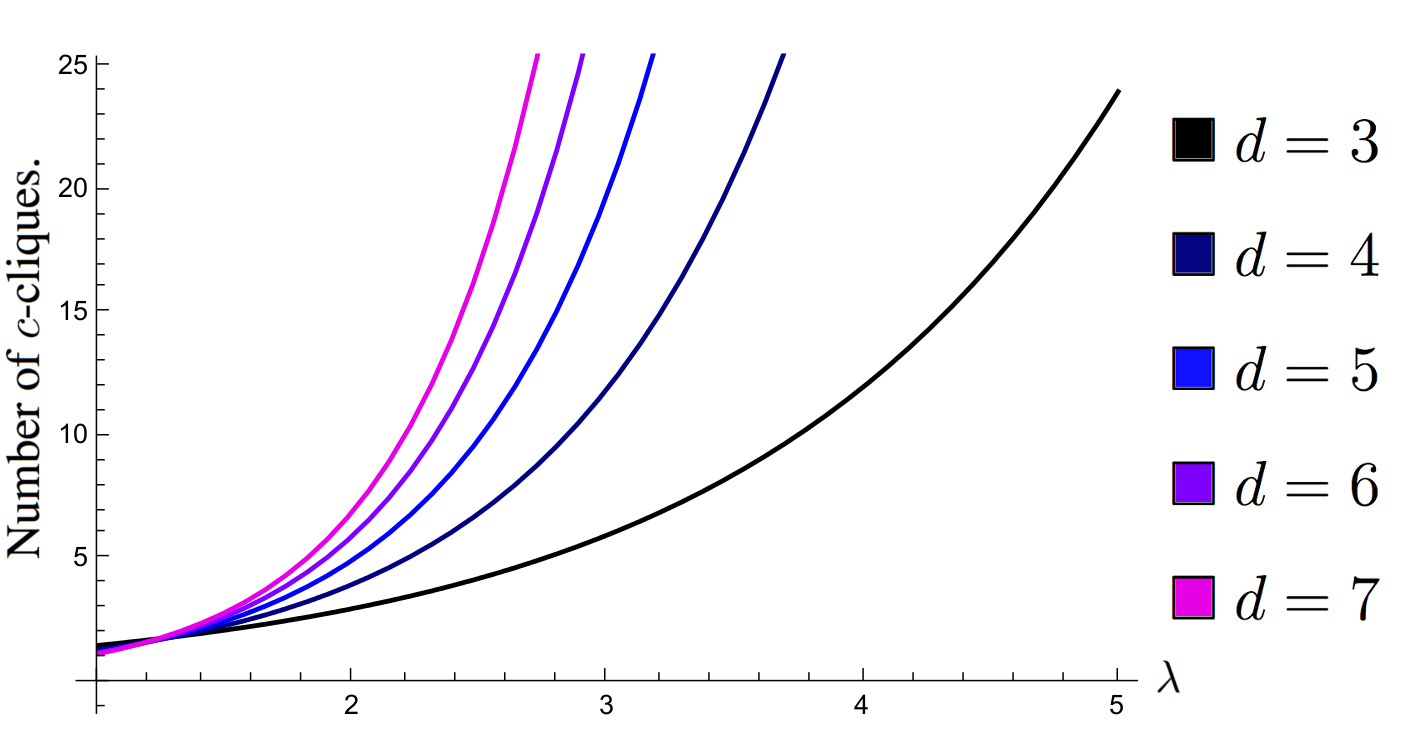}}}\hspace{5pt}
\subfigure[ ~The number of edges in $\mathcal{A}_{d}^\lambda$.]{%
\resizebox*{8.2cm}{!}{\includegraphics{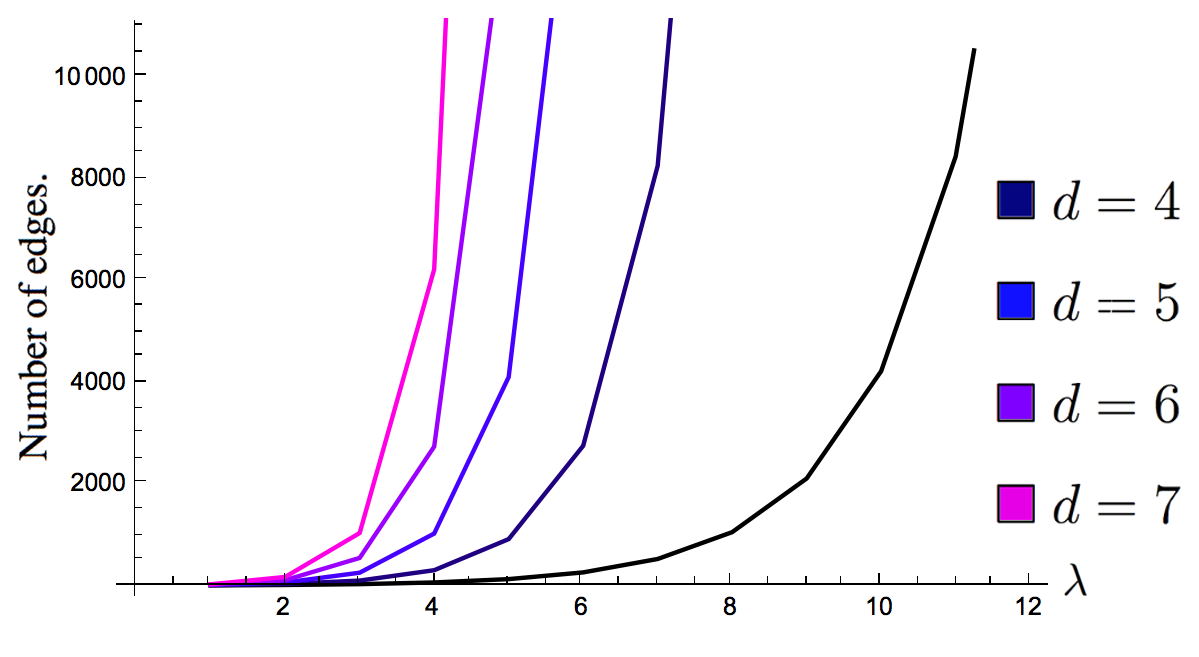}}}
\caption{Clique number and edge growth of $\mathcal{A}_{d}^{\lambda}$ in terms of the height $\lambda$.} \label{holamas2} 
\end{figure}

\subsection{Distances between vertices}\label{cases}

When studying epidemics in $d$-cliqued trees, one needs to consider the different positions that the index vertex (origin of the infection) may take. These positions may be classified into two cases:

\begin{itemize}
\item\textbf{{Case (I):}}   when the infection's origin  $I$ is adjacent to the body of $\mathcal{A}_{d}^\lambda$, as in Figure \ref{casesroot} (a), or is in the body of $\mathcal{A}_{d}^\lambda$,  as in   Figure \ref{casesroot} (b). In this case,  the shortest path between $I$ and any vertex  is in the underlying regular tree  and so the results of  \cite{Seibold} hold.
 \smallbreak 
 \item \textbf{{Case (II):}} when the infection's origin $I$ is in a clique and adjacent to only vertices in a clique as in Figure \ref{casesroot} (c).    Note that from its definition, the body of $\mathcal{A}_{d}^\lambda$  has $$1+\sum_{i=0}^{\lambda-3}d(d-1)^i$$ vertices, and as seen before the whole graph has $d(d-1)^{\lambda-2}$ cliques. The number of cliques in $\mathcal{A}_{d}^\lambda$ depending $\lambda$ is shown in Figure \ref{holamas2} (b). 

\end{itemize}

\begin{figure}[h]
\centering
\subfigure{%
\resizebox*{8.2cm}{!}{\includegraphics{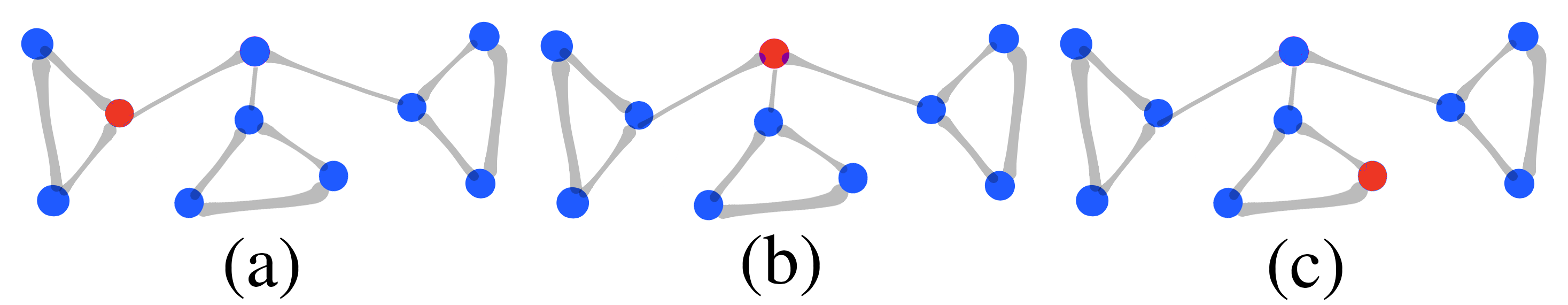}}}
\caption{Placements of an infection's origin (red) in  $\mathcal{A}_3^2$. Figures (a),(b)= Case (I); Figure (c)= Case (II).} \label{casesroot}
\end{figure}

In order to study durations of disease outbreaks on $d$-cliqued networks, one first needs to understand the distances between nodes and the infection index, as well as how many nodes are at each distance. In what follows we shall answer these questions.

\begin{definition} Denote by $D_x$ the shortest path from a vertex $x$ to the root of the underlying tree of $\mathcal{A}_{d}^\lambda$. Its size $|D_x|$ is the {\bf distance} from $x$ to the root.
\end{definition}

From the above analysis one can see that the most interesting case for us will be when the index case is in the outermost leaf of the network, and hence in what follows we shall fix such an $I$. That is, an index case $I$ such that $|D_I|=\lambda$ as in Figure \ref{yo25}. 
\begin{figure}[h]
\centering
\subfigure{%
\resizebox*{8cm}{!}{\includegraphics{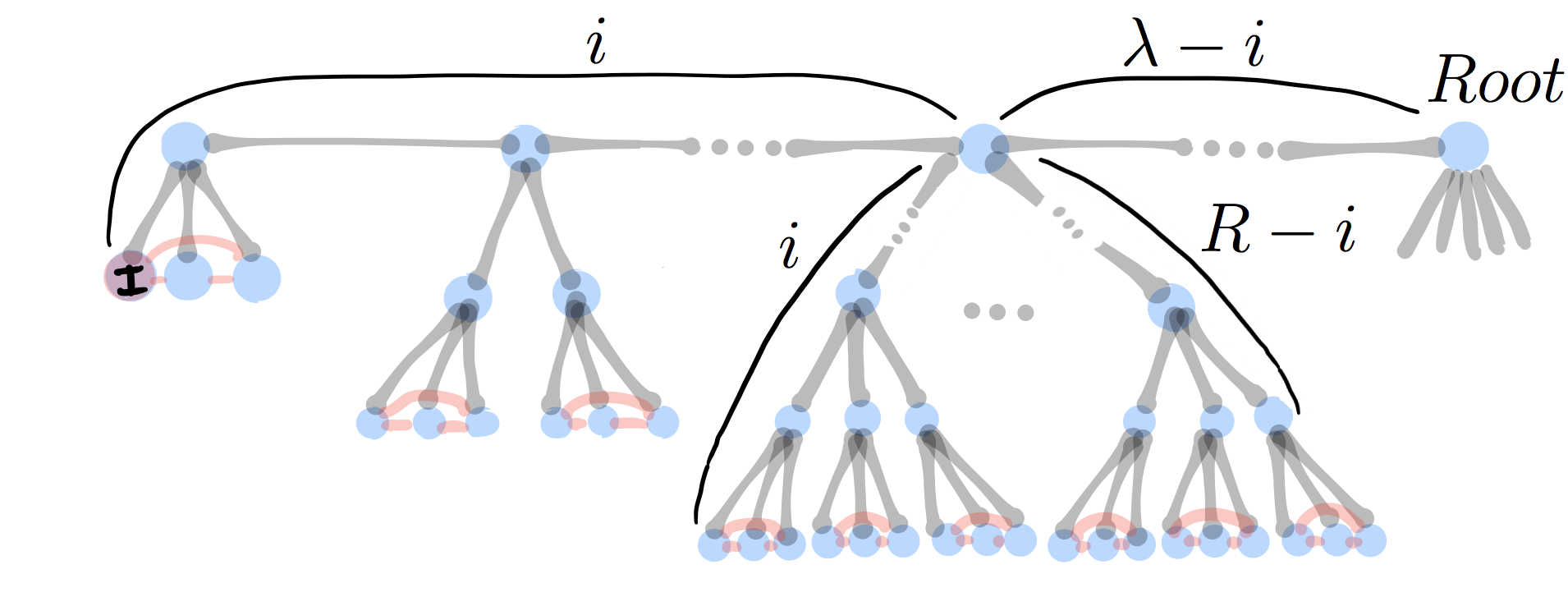}}}
\caption{The main branch of $I$.} \label{yo25}
\end{figure}

\begin{lemma}\label{lemma10}
For $x$ a vertex in $\mathcal{A}_{d}^{\lambda}$, one has the following:
\begin{eqnarray}
d(x,I)=\left\{\begin{array}{ll}
 |D_{x}|+\lambda&{~\rm~if~}|D_x\cap D_I|=0;\\
 |D_x|+\lambda-2\alpha&{~\rm~if~}|D_x\cap D_I|=\alpha>0.
\end{array}
\right.
\end{eqnarray}
\end{lemma}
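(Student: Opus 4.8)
The plan is to reduce the graph distance in $\mathcal{A}_d^\lambda$ to the ordinary path distance in the underlying $d$-regular tree and then read the answer off from the combinatorics of the two root-paths $D_x$ and $D_I$. Since $I$ is an outermost leaf we have $|D_I|=\lambda$, and $D_I$ runs from $I$ up through the apex $p$ of $I$'s clique (the unique level-$(\lambda-1)$ vertex of that clique) and on to the root. For an arbitrary $x$ I would let $z$ be the lowest common ancestor of $x$ and $I$ in the rooted tree; the paths $D_x$ and $D_I$ coincide exactly along the segment from $z$ to the root, so the number of shared edges equals the depth of $z$, that is, $|D_x\cap D_I|=\alpha$ is precisely the level of $z$.

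The tree distance is then immediate: in a tree the geodesic from $x$ to $I$ ascends from $x$ to $z$ and descends from $z$ to $I$, so its length is $(|D_x|-\alpha)+(|D_I|-\alpha)=|D_x|+\lambda-2\alpha$. When $\alpha=0$ the ancestor $z$ is the root itself and this reads $|D_x|+\lambda$, the first case; when $\alpha>0$ it reads $|D_x|+\lambda-2\alpha$, the second case. Thus the two branches of the claimed formula are exactly the two branches of the lowest-common-ancestor computation, and what remains is to confirm that passing from the tree to the cliqued graph does not shorten any of these distances.

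The crux -- and the step I expect to be the main obstacle -- is to rule out shortcuts created by the added clique edges. Here I would invoke the construction in Definition \ref{defi1}: each terminal clique consists of one apex at level $\lambda-1$ together with its $d-1$ leaf children, and the apex is the only clique vertex adjacent to the body, hence a cut vertex separating the clique's leaves from the rest of $\mathcal{A}_d^\lambda$. Consequently, for any $x$ not lying in $I$'s own clique, every walk from $I$ to $x$ must leave through $I$'s apex $p$; the shortest such walk has length $1+d(p,x)$, and since every clique encountered on the way can only be entered at its own apex, no interior clique edge can replace a tree sub-path. Therefore $d(x,I)=1+d_{\mathrm{tree}}(p,x)=d_{\mathrm{tree}}(x,I)$, and the graph distance agrees with the tree distance computed above.

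Finally I would isolate the one genuinely delicate situation, namely $x$ among the $d-2$ leaves sharing $I$'s clique: there the new clique edge realizes $d(x,I)=1$, strictly below the value $2$ returned by the tree formula at $|D_x|=\lambda,\ \alpha=\lambda-1$. I would therefore read the identity as the \emph{tree} distance, valid as the true graph distance for every $x$ outside $I$'s clique -- the regime of the main branch depicted in Figure \ref{yo25} -- with the remaining clique-mates of $I$ forming its immediate distance-$1$ neighborhood, treated separately in the outbreak analysis. Combining the lowest-common-ancestor computation with the apex cut-vertex argument then delivers the two cases $\alpha=0$ and $\alpha>0$ as stated.
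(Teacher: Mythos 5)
Your core computation is exactly the one the paper uses: the paper also picks the deepest vertex $v_0$ of $D_x\cap D_I$ (your lowest common ancestor $z$), writes $d(x,I)=d(x,v_0)+d(v_0,I)=(|D_x|-\alpha)+(\lambda-\alpha)$, and handles $\alpha=0$ by routing through the root, so the two branches of the formula arise for you just as they do there. Where you go beyond the paper is in your last two paragraphs, and both additions are substantive. First, the paper's proof computes distances purely in the underlying tree and never checks that the added clique edges cannot shorten a geodesic; your observation that each clique's level-$(\lambda-1)$ apex is the unique clique vertex adjacent to the body, hence a cut vertex through which every walk entering or leaving a clique must pass, is precisely the missing justification that $d(x,I)$ in $\mathcal{A}_d^\lambda$ agrees with the tree distance for all $x$ outside $I$'s own clique. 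Second, you are right that the statement fails as literally written for the $d-2$ leaves sharing $I$'s clique: there $|D_x|=\lambda$ and $\alpha=\lambda-1$ give the tree value $2$, while the clique edge makes the true distance $1$. The paper neither excludes these vertices nor adjusts the value (and the same blind spot propagates into the vertex count of Proposition \ref{lemmalast}, whose formula gives $N_1=1$ rather than $d-1$), so your decision to state the identity as the tree distance, exact outside $I$'s clique, with the clique-mates of $I$ handled separately as its distance-$1$ neighborhood, is the correct repair rather than a deviation.
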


\begin{proof}
Consider first the case of  $|D_x\cap D_I|=0$, for which one knows that $x$ is not in the main branch of the index vertex $I$, and hence to connect to the index node $I$ one needs to go through the root of the tree. Therefore the shortest path is given by $D_x\cup D_I$, and the first part of the lemma follows. 

For $x$ such that $|D_x\cap D_I|=\alpha>0$, there exists $v_0\in D_x\cap D_I$ such that $$\alpha=|D_{v_{0}}|\geq |D_v|$$ for all $v\in  D_x\cap D_I$. Thus,
 \begin{eqnarray}d(x,I)&=&d(x,v_0)+d(v_0,I)\\&=&\left(|D_x|-\alpha\right)+\left(\lambda-\alpha\right)\\&=&|D_x|+\lambda-2\alpha
 \end{eqnarray} and the lemma follows.
\end{proof}

In order to calculate the probabilities of infections lasting certain amount of time, and thus find the induced safe zones, one needs to understand the number of vertices at different distances from the infection index $I$.

\begin{proposition}\label{lemmalast}
The number $N_R$ of vertices $v$ of $\mathcal{A}_d^{\lambda}~$ for which $d(v,I)=R$ is 
\begin{eqnarray}
N_R=\left\{\begin{array}{ll}
(d-1)^{R-\lambda}+\sum_{R/2\leq i < \lambda}\eta(d,R,i)&R\geq\lambda;\\
1+\sum_{R/2\leq i < R} \eta(d,R,i)&R<\lambda,
\end{array}
\right.
\end{eqnarray}
for $\eta(d,R,i):=(d-2)(d-1)^{R-1-i}$.
\end{proposition}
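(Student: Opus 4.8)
The plan is to count vertices $v$ with $d(v,I)=R$ by stratifying them according to $|D_v|$, the distance from $v$ to the root of the underlying tree, and then using Lemma \ref{lemma10} to translate the constraint $d(v,I)=R$ into a constraint on $|D_v|$ and on the overlap $\alpha=|D_v\cap D_I|$ of the main branch of $v$ with that of $I$. Since $I$ is fixed in the outermost leaf with $|D_I|=\lambda$, every vertex $v$ falls into exactly one of the two regimes of Lemma \ref{lemma10}: either its path to the root shares nothing with $I$'s branch ($\alpha=0$, giving $d(v,I)=|D_v|+\lambda$), or it branches off from $I$'s path at some level $\alpha>0$ (giving $d(v,I)=|D_v|+\lambda-2\alpha$).

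First I would handle the $\alpha=0$ contribution, which forces $|D_v|=R-\lambda$ and hence only occurs when $R\ge\lambda$. These are precisely the vertices hanging off the root on branches disjoint from $I$; I would count them as the leading term $(d-1)^{R-\lambda}$, tracking carefully that leaving the root along a non-$I$ direction gives the $(d-1)$-fold branching at each subsequent level. Next I would handle the $\alpha>0$ contributions: for each branch-off level $i$ (where I set $i=\alpha$), a vertex satisfying $d(v,I)=R$ must have $|D_v|=R-\lambda+2i$, and the vertices that diverge from $I$'s main branch exactly at level $i$ and then descend are enumerated by the factor $\eta(d,R,i)=(d-2)(d-1)^{R-1-i}$ — the $(d-2)$ reflecting the sibling directions available at the branch point other than continuing along $I$'s path, and the $(d-1)^{R-1-i}$ the free branching thereafter. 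The summation range $R/2\le i$ comes from requiring $|D_v|=R-\lambda+2i\ge 0$ together with $d(v,I)=R$, while the upper cutoff ($i<\lambda$ when $R\ge\lambda$, and $i<R$ when $R<\lambda$) reflects that $i$ cannot exceed the depth at which branching is still possible, and that in the shallow regime $R<\lambda$ the root itself contributes the standalone $1$.

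The main obstacle will be getting the index bookkeeping exactly right at the boundaries: I must verify that the branch-off level $i$ ranges over precisely $\{ \lceil R/2\rceil,\dots\}$ up to the correct cap, that the clique edges at level $\lambda$ do not introduce extra shortcuts that would invalidate the pure tree-distance computation of Lemma \ref{lemma10} (they do not change $d(v,I)$ for the counted vertices, since the relevant shortest paths stay within the underlying tree), and that the two cases $R\ge\lambda$ and $R<\lambda$ partition correctly with the $\alpha=0$ term appearing only in the former. I would close by confirming consistency at small $R$ and at $R=\lambda$, where both pieces should agree, to make sure no vertex is double-counted or omitted.
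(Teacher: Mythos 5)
Your overall strategy is the paper's: stratify the vertices $v$ by the point $v_0$ where $D_v$ diverges from $I$'s main branch, use Lemma \ref{lemma10} to pin down $|D_v|$ from $d(v,I)=R$, treat the disjoint case $|D_v\cap D_I|=0$ separately as the term $(d-1)^{R-\lambda}$, and count $(d-2)(d-1)^{\#}$ descendants per branch point. However, your identification of the summation index breaks the key count. You set $i=\alpha=|D_v\cap D_I|$, the depth of the branch point. Under that convention $|D_v|=R-\lambda+2i$, the descent from $v_0$ to $v$ has length $|D_v|-i=R-\lambda+i$, and so the number of vertices branching off at depth $i$ is $(d-2)(d-1)^{R-\lambda+i-1}$, which is \emph{not} $\eta(d,R,i)=(d-2)(d-1)^{R-1-i}$ (the exponents agree only when $i=\lambda/2$). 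The index in the statement has to be $i=\lambda-\alpha=d(v_0,I)$, i.e.\ the convention $|D_v\cap D_I|=\lambda-i$ used in the paper's proof: then $|D_v|=R+\lambda-2i$, the descent has length $R-i$, and one gets $(d-2)(d-1)^{R-1-i}$ as claimed. Your derivation of the bounds inherits the same error: $i\ge R/2$ does not come from $|D_v|\ge 0$ (under your convention that condition reads $i\ge(\lambda-R)/2$), but from the ceiling $|D_v|=R+\lambda-2i\le\lambda$; the cutoff $i<R$ (resp.\ $i<\lambda$) comes from requiring $v$ to lie strictly below $v_0$ (resp.\ $\alpha>0$). As written, carrying out your plan would not reproduce the stated formula.

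Two smaller points. The standalone $1$ in the case $R<\lambda$ is not ``the root'': it is the unique ancestor of $I$ on its main branch at depth $\lambda-R$ (which is the root only when $R=\lambda$); for $R<\lambda$ the root itself is at distance $\lambda>R$ and contributes nothing. Your parenthetical claim that the clique edges never shorten $d(v,I)$ is also not quite right -- the $d-2$ clique siblings of $I$ are at graph distance $1$ rather than tree distance $2$ -- but since the paper's Lemma \ref{lemma10} and Proposition \ref{lemmalast} silently compute tree distances as well, this is a defect shared with the paper rather than a gap specific to your argument.
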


  \begin{proof}
Consider first those nodes $v\in\mathcal{A}_d^\lambda$ for which $d(v,I)=R<\lambda$. By definition of $\mathcal{A}_{d}^\lambda$, and as seen in Figure \ref{yo25}, for each $R/2\leq i < R$ there are  
\begin{eqnarray}\label{new1}\eta(d,R,i)=(d-2)(d-1)^{R-1-i}\end{eqnarray}
vertices $x$ for which 
\begin{eqnarray}\label{new2}d(v,I)=R~{\rm ~and~}~|D_v\cap D_I|=\lambda-i.\end{eqnarray}
Moreover, for $i=R$ there is only one vertex. Indeed, from the above lemma those vertices $x$ satisfy  $d(v,I)=|D_v|-\lambda+2i=R,$ and thus the proposition  for $R<\lambda$ follows   summing over all possible $i$, where $R/2\leq i<R$.

In order to understand the case of $R\geq \lambda$, note that    those nodes $v\in\mathcal{A}_d^\lambda$ for which $d(v,I)=R\geq\lambda$   satisfy $| D_v\cap D_I | =\lambda-i$ for $0\leq i \leq \lambda$. These cases can be classified as follows:
\begin{itemize}
\item[(a)]  $| D_v\cap D_I | =0$, or 
\item[(b)]  $| D_v\cap D_I |= \lambda$, or

\item[(c)]  $| D_v\cap D_I |= \lambda - i  $ for $0< i<\lambda$.
\end{itemize}
When considering case (a), note that by Lemma \ref{lemma10} one needs to consider points $v$ in all branches not containing $I$ of $\mathcal{A}_d^{\lambda}$, such that $|D_v|= R-\lambda$.   By definition of a perfect $(d-1)-$ary tree,  there are  $(d-1)^{R-\lambda}$ such points.

Consider next the case (b) above, and note that when $i=0$, then there is only one vertex $v$  satisfying the condition, which is the root, and thus corresponds to $R=\lambda$. Finally, consider  case (c):  from \eqref{new1}-\eqref{new2} and Figure \ref{yo25}, for each $R/2\leq i < \lambda$, one has  $(d-2)(d-1)^{R-1-i}$ vertices $v$ for which $d(v,I)=R$ and $|D_x\cap D_I|=\lambda-i$. The proposition then follows for $R\geq \lambda$ summing over all cases (a), (b) and (c).
 \end{proof}

 The results from the above proposition can be visualised through Mathematica for a fixed value of $\lambda$ as in Figure \ref{distance}. 
 In the following sections we will use these results to study different models of epidemic outbreaks on $d$-cliqued tree networks. 
Finally, one should note that what has been shown in Lemma \ref{lemma10} and Proposition \ref{casesroot} serves to obtain the probabilities $P_i$ of an outbreak lasting $i$ ticks within $d$-ary trees when the index case is in the outermost vertex. These contact networks are those studied in  \cite{Seibold}, but such probabilities were not given by the authors. Because of this, we calculated them here since we needed them to derive several useful comparisons between our model and that of  \cite{Seibold}, and in particular to produce Figure \ref{like1} and Figure \ref{like}.

 \begin{figure}[h]
\centering
%\subfigure[The number $N_R$ of vertices at distance $R<\lambda$.]{%
{\resizebox*{7cm}{!}{\includegraphics{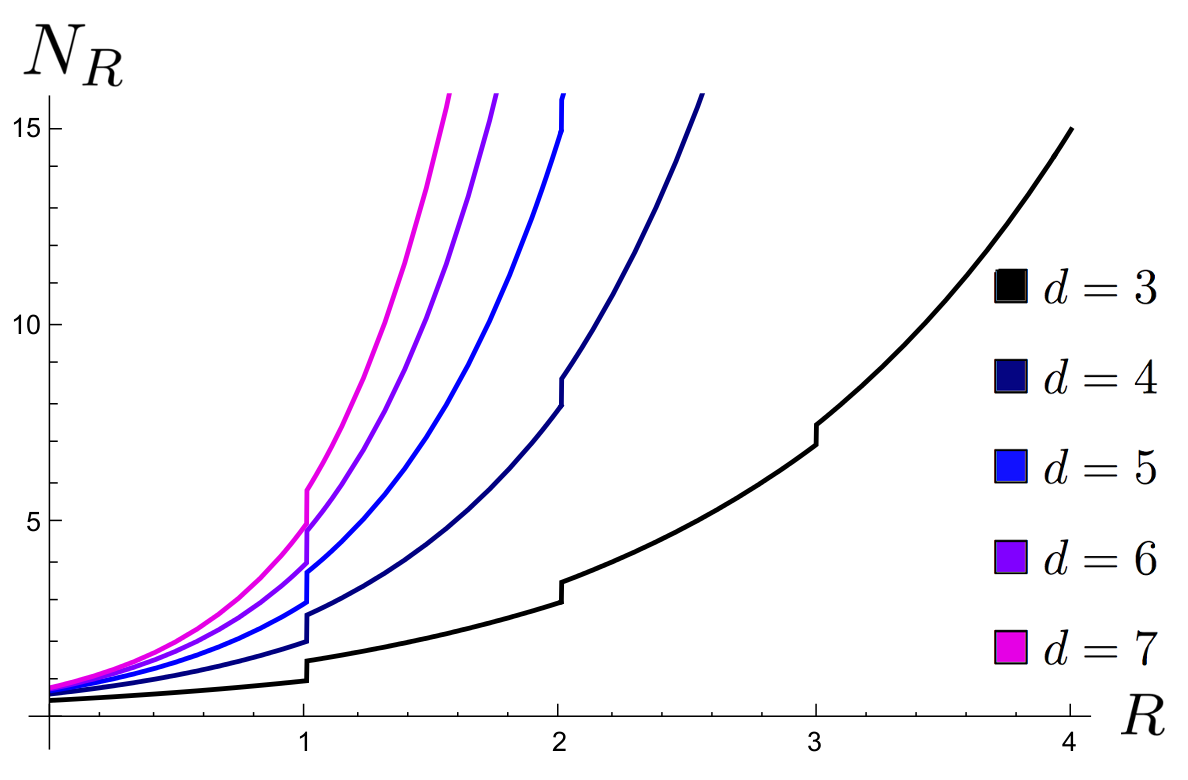}}}%\hspace{5pt}
%\subfigure[The number $N_R$ of vertices at distance $R>\lambda$.]{%
%\resizebox*{7cm}{!}{\includegraphics{Distance1.png}}}
%\caption{The number $N_R$ of vertices at distance $R$ for $\lambda=5$.} \label{distance} 
\caption{The number $N_R$ of vertices at distance $R<\lambda$.} \label{distance} 
\end{figure}

\section{Duration of an outbreak using a next-generation model}
 
 Recall that in a next-generation model, one sets the probability of infection to $P_{inf}=1$, and the recovery probability to  $P_{rec}=1$.  Then, in this case, the outbreak duration is $2\lambda+1$, independently of the degree $d$. The examples in Figure \ref{holamas20} (a) for our model should be compared to \cite[Figure 6]{Seibold} appearing in  Figure \ref{holamas20} (b) for the model on perfect $d$-ary trees.
\begin{figure}[h]
\centering
\subfigure[~Disease duration in a next-generation model with $P_{inf} =1$ on $\mathcal{A}_{d}^{\lambda}$.]{%
\resizebox*{8.2cm}{!}{\includegraphics{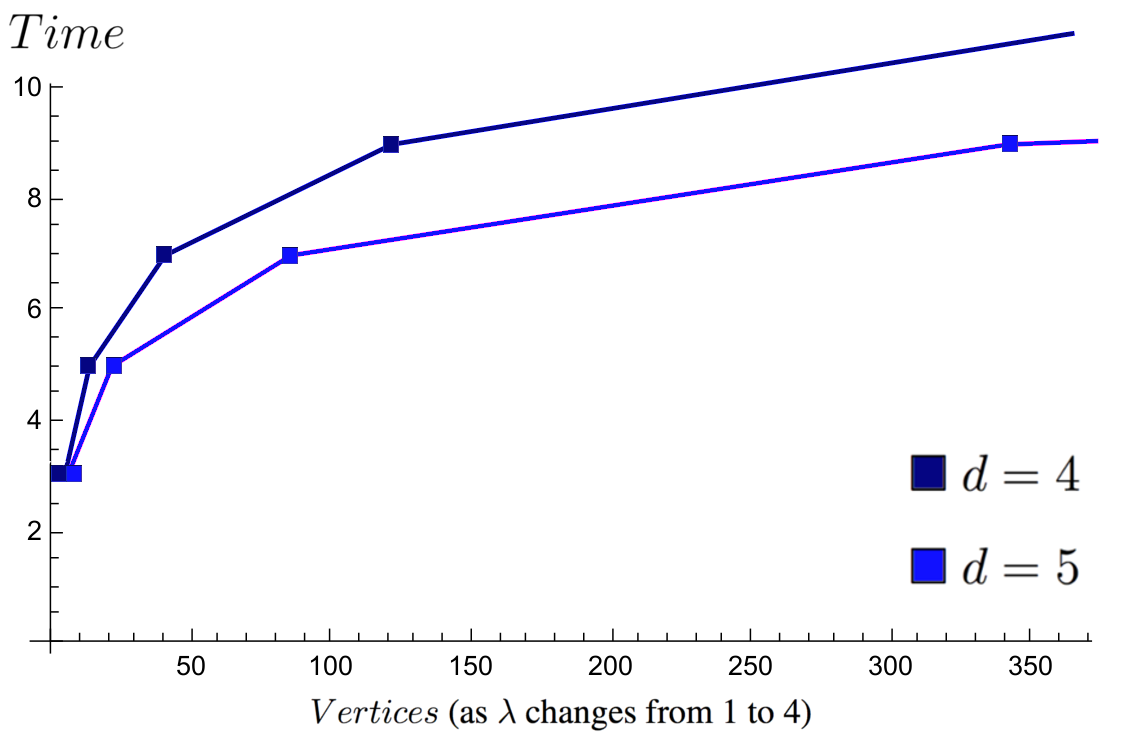}}}\hspace{5pt}
\subfigure[~Disease duration in $2$-ary trees from Figure 6 of  Seibold-Callender (2016) \cite{Seibold}.~]{%
\resizebox*{8.2cm}{!}{\includegraphics{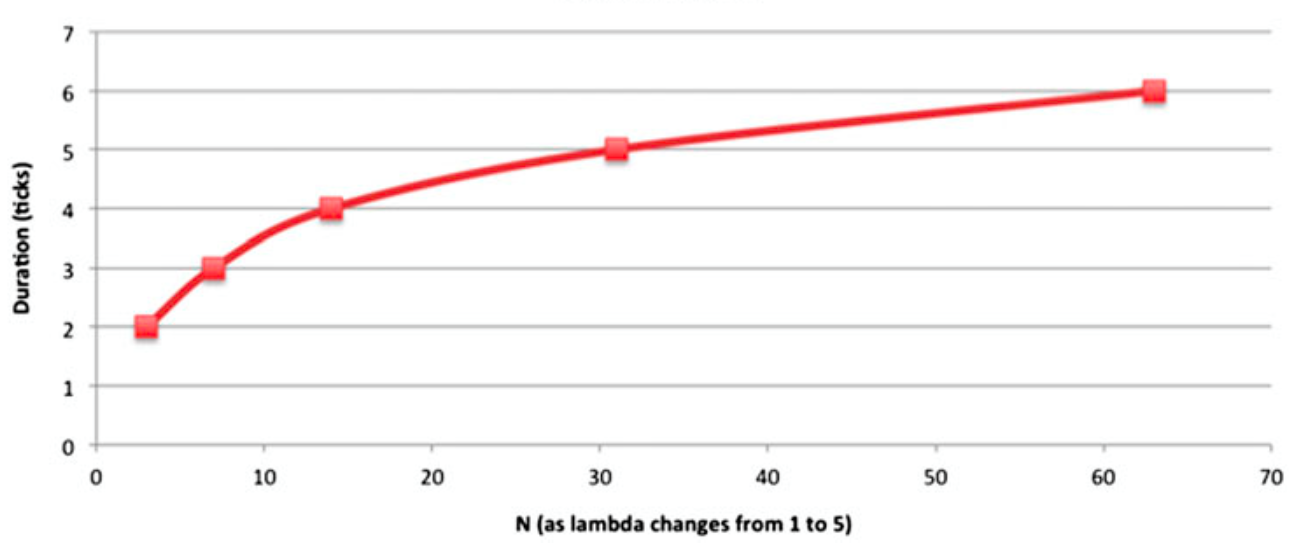}}}
\caption{Comparison of disease duration in a next-generation model.} \label{holamas20} 
\end{figure}

As seen before, the degree $d$ of $\mathcal{A}_d^{\lambda}$ determines the number of nodes of the graph once the  height $\lambda$ is fixed. However, for  SIR models on these graphs, just as in the case of  \cite{Seibold}, the degree $d$  has no effect on duration when the root is the index case, and only the height $\lambda$ matters. On the other hand, since the height determines the network size, one can see how it influences the duration of the outbreak as in Figure \ref{holamas20} (a) giving the duration scaling as $2\lambda+1$ with population size  stated as before.

 \subsection{Duration scaling with degree $d$ and  height $\lambda$}

From the previous analysis, when the index case is located in the root, the outbreak will have the shortest duration regardless of the value of $d$, since   the maximum distance from the root to any given vertex is the height  $\lambda$. Taking into account the additional +1 step required to have the last infected nodes to become removed one has that the duration of the outbreak needs to be at most $\lambda+1$. Note that, in particular, this case is equivalent to the one in  \cite{Seibold}. 

Equivalently, the work for d-ary trees will also hold when a random index node is chosen:  since the diameter of the graph $\mathcal{A}_d^{\lambda}$ is $2\lambda$, the duration is bounded above by $2\lambda+1$ when $d>1$. Note that when  $d = 1$, the graph $A_1^\lambda$ only exists for $\lambda=1$, and it consists of two vertices connected by one edge. Hence, in this case, a random index will be equivalent to having the index in the root. As done before, we will denote by $\delta_\lambda$ the expected average outbreak duration, and $P_i$ the probability the outbreak lasts exactly $i$  ticks. Then,  $\delta_{\lambda}=2~{\rm ~for~} d=1$ and the bounds of duration $\delta_\lambda$ otherwise are therefore given by
\begin{eqnarray}
 \lambda +1\leq \delta_{\lambda} \leq 2\lambda+1~{\rm ~when~}~ d>1. 
 \end{eqnarray}

 \subsection{Duration scaling with degree $d$ and the  index case location}
As mentioned before, the position of the origin of the infection (the index case $I$) within the graph will greatly determine the bounds on the outbreak durations. Moreover, both in {\bf Case (I)}, as well as in {\bf Case (II)}  of Section \ref{cases},  the outbreak durations for a next generation model will return results equivalent to those in  \cite{Seibold}.
To see this, one needs to keep in mind that in a next generation model, the duration will always be ${\rm max}\{d(v,I)+1~|~ v\in \mathcal{A}_d^{\lambda}\}.$
 
 As in the case of $d$-ary trees, when a random index case is introduced away from the outermost vertex, the mean duration is given by the weighted average of the duration for each node multiplied by the number of nodes with that duration, this is, the weighted average of the vertices in each level multiplied by the duration they would produce. Hence, for each value of $\lambda>1$, the work in  \cite{Seibold} still holds and  the expected outbreak duration $\delta_\lambda$ is
  \begin{eqnarray}
\delta_\lambda=\begin{array}{lc}
\frac{1}{\sum_{i=0}^{\lambda}d^i}\sum_{j=0}^{\lambda}\left((d^j)(\lambda+j+1)\right),\end{array}
\end{eqnarray} 
and is depicted in Figure \ref{yo255}: 
\begin{figure}[h]
\centering
\subfigure{%
\resizebox*{7cm}{!}{\includegraphics{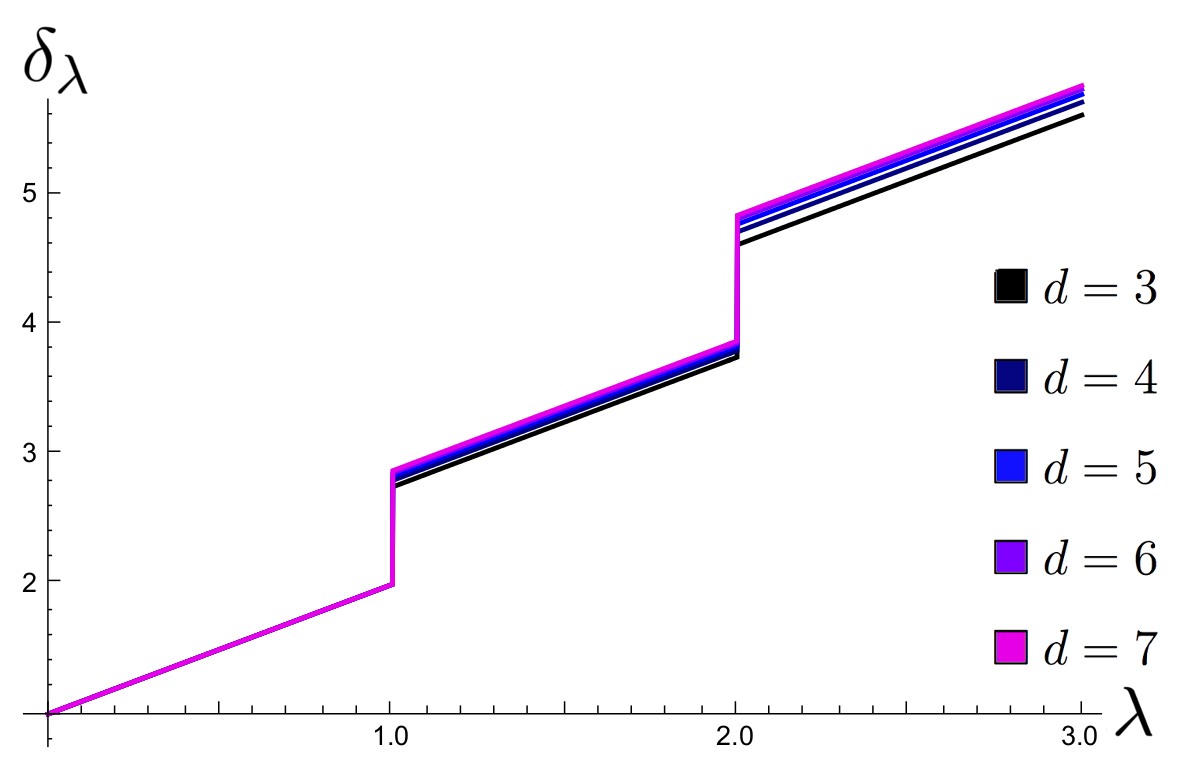}}}
\caption{The mean duration outbreak $\delta_\lambda$.} \label{yo255}
\end{figure}

When $\lambda=1$ the results are quite different. Indeed, in this case the graph $\mathcal{A}_{d}^1$ is the complete graph on $d+1$ vertices, and thus all nodes are infected within 1 tick, whatever the choice of infection index is. Hence, the duration outbreak in this case is 2,   different from Eq. (\ref{mean}) of  \cite{Seibold} for d-ary trees. Moreover, these differences become even more apparent when allowing the probabilities of infection  $P_{inf}$ and recovery $P_{rec}$ to vary, as in Figure \ref{difference}. 
 \begin{figure}[h]
\centering
\subfigure[The average outbreak duration  $\delta_{\lambda}$ depending on the recovery probability in $\mathcal{A}_4^{1}$, averages  taken over 20 simulations in IONTW.]{%
\resizebox*{7cm}{!}{\includegraphics{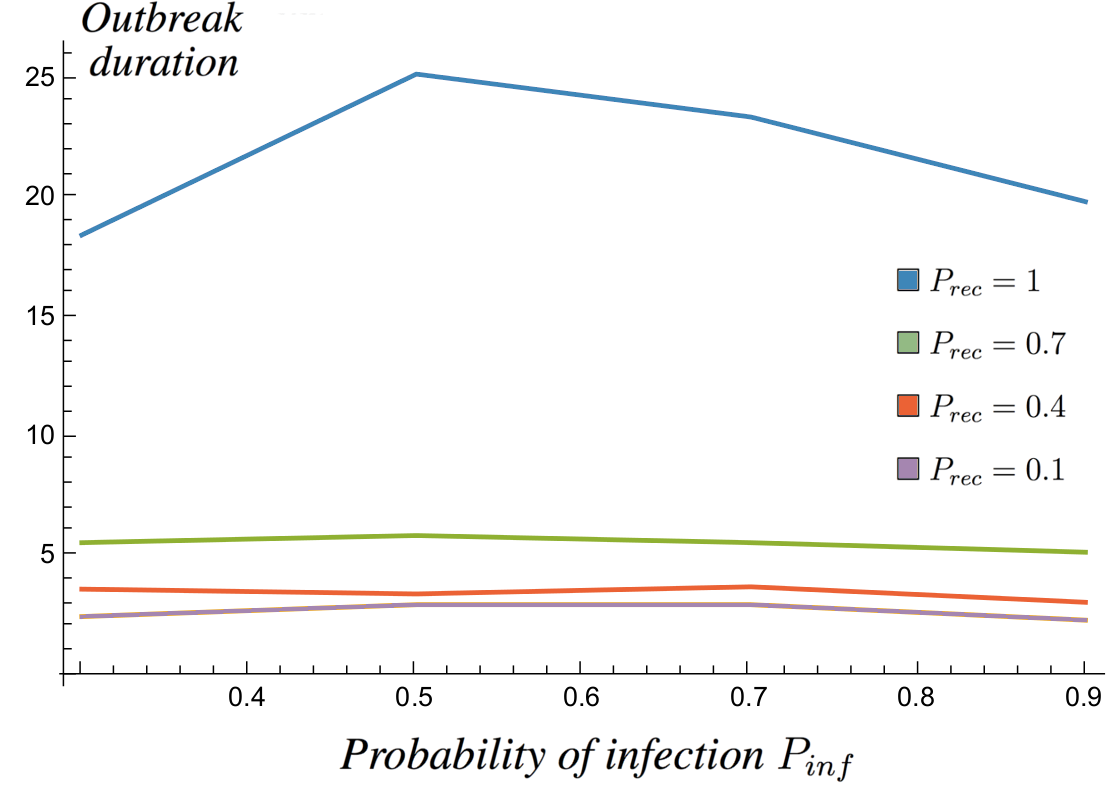}}}\hspace{5pt}
\subfigure[Duration  on a $4$-regular tree graph with $\lambda=1$ from  Seibold-Callender (2016) through IONTW.]{%
\resizebox*{7cm}{!}{\includegraphics{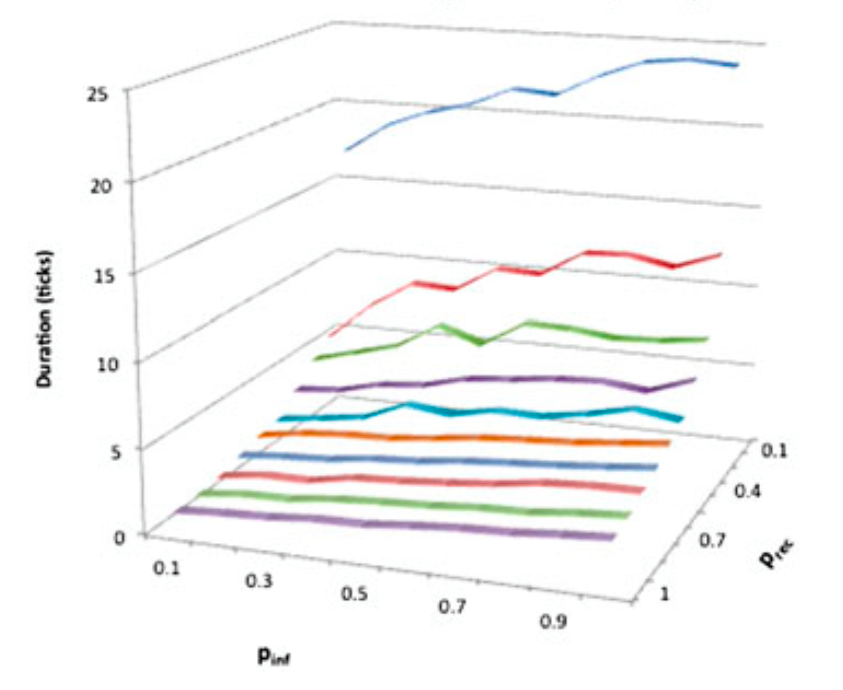}}}
\caption{Durations for   graphs with   $\lambda=1,$ and $d=4$.} \label{difference} 
\end{figure} 

 \vspace{-0.5 cm}

\section{Duration of an outbreak using a general discrete-time model}

When considering discrete-time models instead of next-generation models, outbreaks are specified in terms of varying   probabilities of infection $P_{inf}$ and of recovery $P_{rec}$, which can take values  between 0 and 1. As an example, if $P_{inf} = 0.25$, then the adjacent node(s) to an infectious node will have a $25\%$ probability of becoming infectious at the next time step. In particular, when $P_{inf}$ decreases, some vertices may escape infection and thus the infection outbreak might last less than $2\lambda+1$ ticks (recall that in  Figure \ref{difference} one could see the influence of the different probabilities on  the duration outbreak). In what follows we shall study the probabilities $P_j$ of an outbreak lasting $j$ tick (time units), and how these correlate to the scaling of different parameters.

\subsection{Duration scaling with population size}

Given an index node $I$ in an outermost vertex,   the values of the probabilities $P_j$ for $j<\lambda$ will be much higher than in  \cite{Seibold}. 
To calculate  the average duration  as a weighted average of all the possible durations, as well as to calculate  the probabilities $P_j$ for $ j\leq \lambda$, one needs to consider the chance of an outbreak lasting $j$ ticks and no one else getting infected after that time. For instance,   by construction, 
$P_1= (1-P_{inf})^{d-1}.$

 As a comparison with  \cite{Seibold},  consider an outbreak with the  index   case as one of the outermost vertices of $\mathcal{A}_3^{\lambda}$, this is, a vertex only adjacent to one of the $3$-clique's vertices. Setting the recovery probability $P_{rec}=1$ but leaving $P_{inf}<1$ unfixed,  there are two nodes connected to the infectious index node. Then, there are three possible options for infection at the next time step: either no node becomes infected (with probability $(1-P_{inf} )^2$), both nodes become infected with probability $P_{inf}^2$, or    only one node becomes infected. Since it could be either adjacent node,  this has probability of $2\cdot (P_{inf}(1-P_{inf})).$  
Through the same analysis as above, for   generic degree $d$, all possible combinations of infectious nodes need to be taken into account. Hence,   since the index node has $d-1$ adjacent vertices, in $\mathcal{A}_d^{\lambda}$ it can infect $k$ vertices in $\binom{d-1}{k}$ ways, each with probability $P_{inf}^k(1-P_{inf})^{d-1-k}.$ 

To calculate $P_2$ one would need the infection to last 2 ticks and no more, and thus all infected nodes at time 2 would not be allowed to infect anyone else. The calculation becomes more complicated in this setting than in $d$-ary trees since nodes which were not infected at time $j$ within the clique might become infected at time $j+1$ through a non-minimal path to the index node. Hence, when calculating the total probabilities, one needs to differentiate between whether the vertices connected to the body of the clique are infected or not. In what follows we shall illustrate how to calculate these probabilities. 

Recall that the index case is adjacent to $d-2$ vertices of the clique who are only adjacent to other vertices of the clique, and one vertex $v_0$ of the clique adjacent to the body of $\mathcal{A}_d^{\lambda}$,  each which may be infected with probability $P_{inf}$, or not with probability $(1-P_{inf})$. If $v_0$ is not infected, then to calculate $P_2$, it should remain not infected at time 3, which will happen with probability
 \begin{eqnarray}
&~&\underbrace{(1-P_{inf})^2}_{v_0~{\rm uninfected~at~time~2~and~3}}\nonumber\\&~&\left(
 \sum_{k=1}^{d-2}\underbrace{\binom{d-2}{k} P_{inf}^{k} (1-P_{inf})^{d-2-k}}_{{\rm only~}k~{\rm infected~at~time~2}}\right.\nonumber\\ &~&\left.\overbrace{(1-P_{inf})^{d-2-k}}^{\rm no~uninfected ~get~infected~at~time~3}\right).\label{p21}
 \end{eqnarray}
  
  When the vertex $v_0$ is infected, something similar happens to make the outbreak last 2 ticks, except that the neighbours of $v_0$ outside the clique need to be taken into account. Hence, this case  contributes towards $P_2$ with the following 
 \begin{eqnarray} 
&~&\underbrace{P_{inf}(1-P_{inf})}_{v_0~{\rm does~not~infect~outside~clique}} \nonumber \\ &~& \left(
 \sum_{k=1}^{d-2}\underbrace{\binom{d-2}{k} P_{inf}^{k} (1-P_{inf})^{d-2-k}}_{{\rm only~}k~{\rm infected~at~time~2}}\right.  \nonumber\\ &~& \left.\overbrace{(1-P_{inf})^{d-2-k}}^{\rm no~uninfected ~get~infected~at~time~3}\right)  \label{p22}\end{eqnarray}

 Then, to calculate $P_2$ one needs to sum  Eq. \eqref{p21} and  Eq. \eqref{p22}, leading to
 
  \begin{eqnarray}P_2&=&
 \left((1-P_{inf})^2+ P_{inf}(1-P_{inf}) \right)\nonumber\\&~&\left(
 \sum_{k=1}^{d-2} \binom{d-2}{k} P_{inf}^{k} (1-P_{inf})^{2d-4-2k}  \right)\nonumber\\
 &=&  \sum_{k=1}^{d-2} \binom{d-2}{k} P_{inf}^{k} (1-P_{inf})^{2d-3-2k}. \label{pdA}
 \end{eqnarray}

 Whilst one could compare this probability with the one found in  \cite{Seibold}, the latter was obtained for an index case in the root and thus the comparison would not be too enlightening (see Figure \ref{differenceP2} (a)). However, we can deduce what the probabilities would have been for an index case in the same position as the one we are taking in $\mathcal{A}_d^\lambda$, in a vertex only adjacent to clique members.

  As an example, in the setting of  \cite{Seibold}, which is for $d$-ary trees, one has that when $d=4$ the probability of the outbreak lasting 2 ticks is
 \begin{eqnarray}
 P_2=P_{inf}(1-P_{inf})^4.\label{P21}
 \end{eqnarray}
 Moreover, one may want to also compare this to the setting of $d$-regular trees, in which case for $d=4$ one has 
  \begin{eqnarray}
 P_2=P_{inf}(1-P_{inf})^3.\label{P22}
 \end{eqnarray}
 Comparing all of these settings, one can see how the chances of the outbreak lasting 2 ticks are much higher in the case of $\mathcal{A}_d^{\lambda}$, as seen in Figure \ref{differenceP2} (b) below.

  \begin{figure}[h]
\centering
\subfigure[Root index  of $4$-ary tree as in  Eq. \eqref{p2tree} from    \cite{Seibold}, compared with outermost index in $\mathcal{A}_4^{\lambda}$.]{%
\resizebox*{6 cm}{!}{\includegraphics{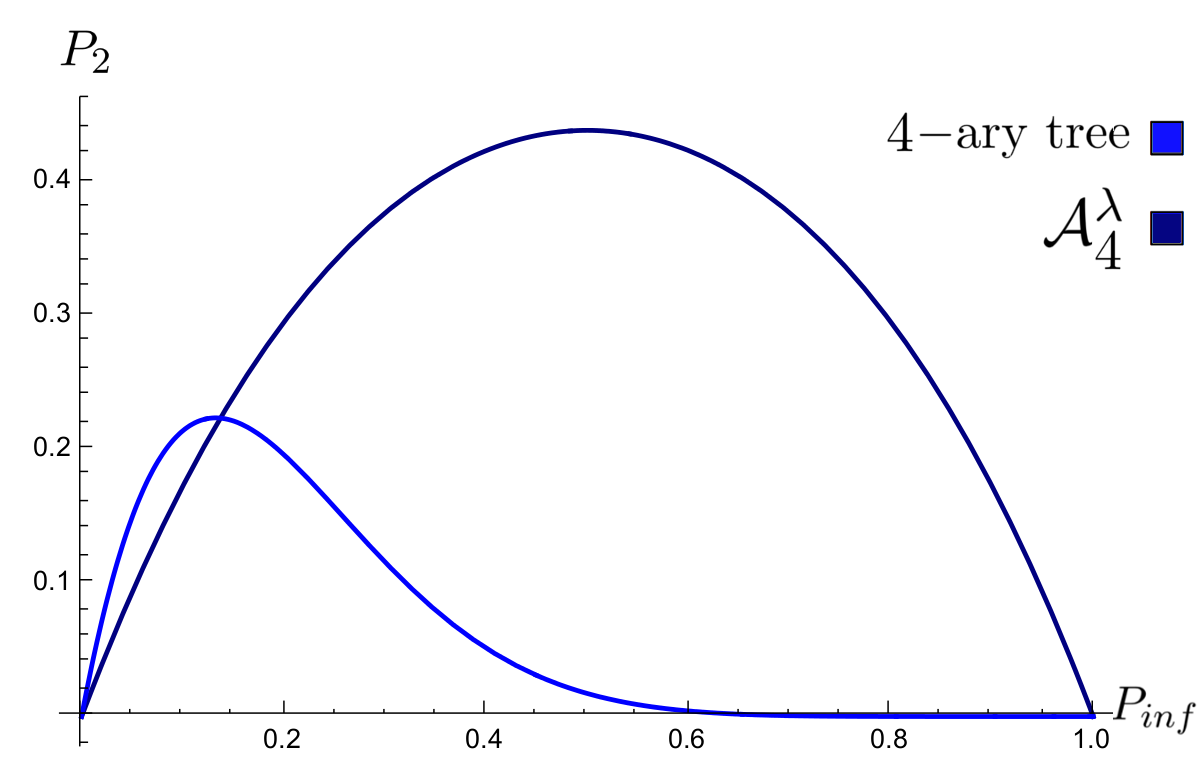}}}\hspace{5pt}
\subfigure[Index case in outermost vertex of the $4$-ary tree as in Eq. \eqref{P21}, and of the $4$-regular tree as in  Eq. \eqref{P22}, compared to $\mathcal{A}_4^{\lambda}$.]{%
\resizebox*{6 cm}{!}{\includegraphics{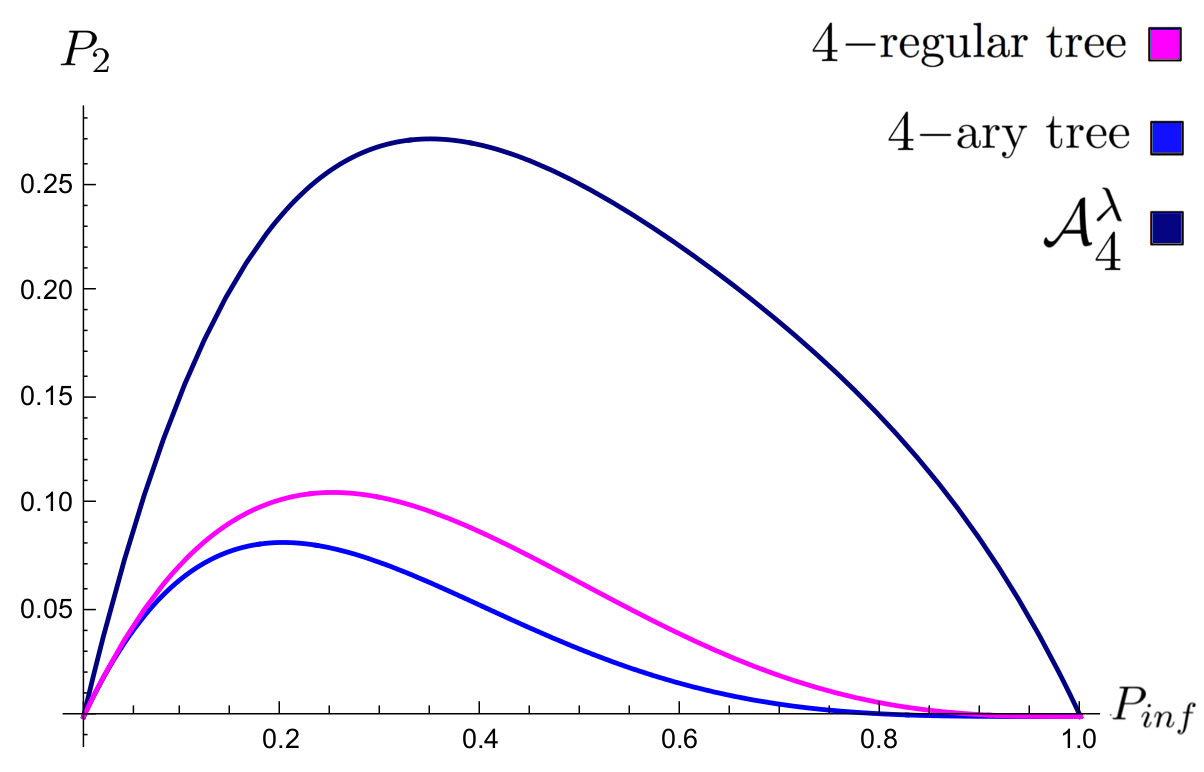}}}
\caption{The values of $P_2$ in terms of the probability of infection $P_{inf}$ for the graph $\mathcal{A}_{4}^{\lambda}$ of  Eq. \eqref{pdA},  and different index cases for the $4$-ary tree of  \cite{Seibold} , and $d$-regular trees.} \label{differenceP2} 
\end{figure}
 
\subsection{Duration scaling with degree $d$}
Whilst the height $\lambda$ of the graph $\mathcal{A}_{d}^{\lambda}$ will not influence the values of $P_i$ for $i<\lambda$, the variation of  the degree $d$ will have a significant effect. In particular, it will enhance the differences between our model and the one studied in  \cite{Seibold}. Consider for instance $P_{inf}=0.4$. Then, for different values of the degree $d$, the probability $P_2$ for $d$-ary trees and for $d$-regular trees can be compared to the one for our $d$-cliqued networks $\mathcal{A}_d^{\lambda}$   in Figure \ref{P222}.

  \begin{figure}[h]
\centering
\subfigure[Probability $P_2$ for $d$-ary trees and $d$-reguar trees.]{%
\resizebox*{7cm}{!}{\includegraphics{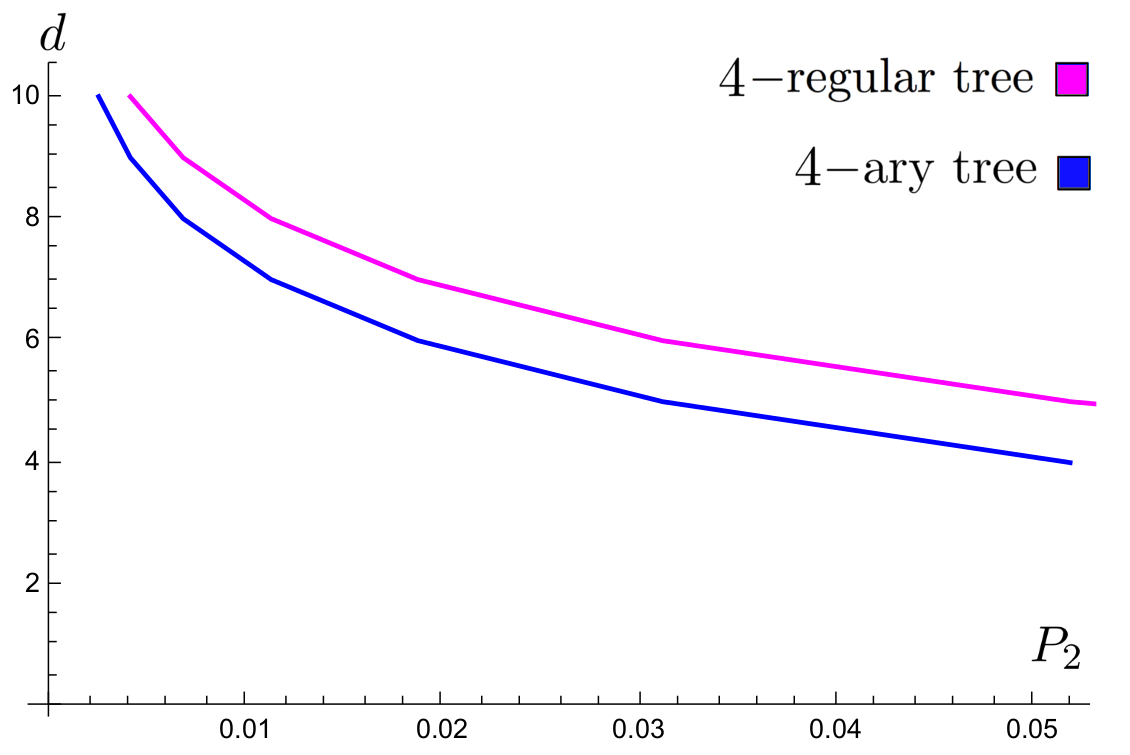}}}\hspace{5pt}
\subfigure[Probability $P_2$ for $\mathcal{A}_{d}^{\lambda}$.]{%
\resizebox*{7cm}{!}{\includegraphics{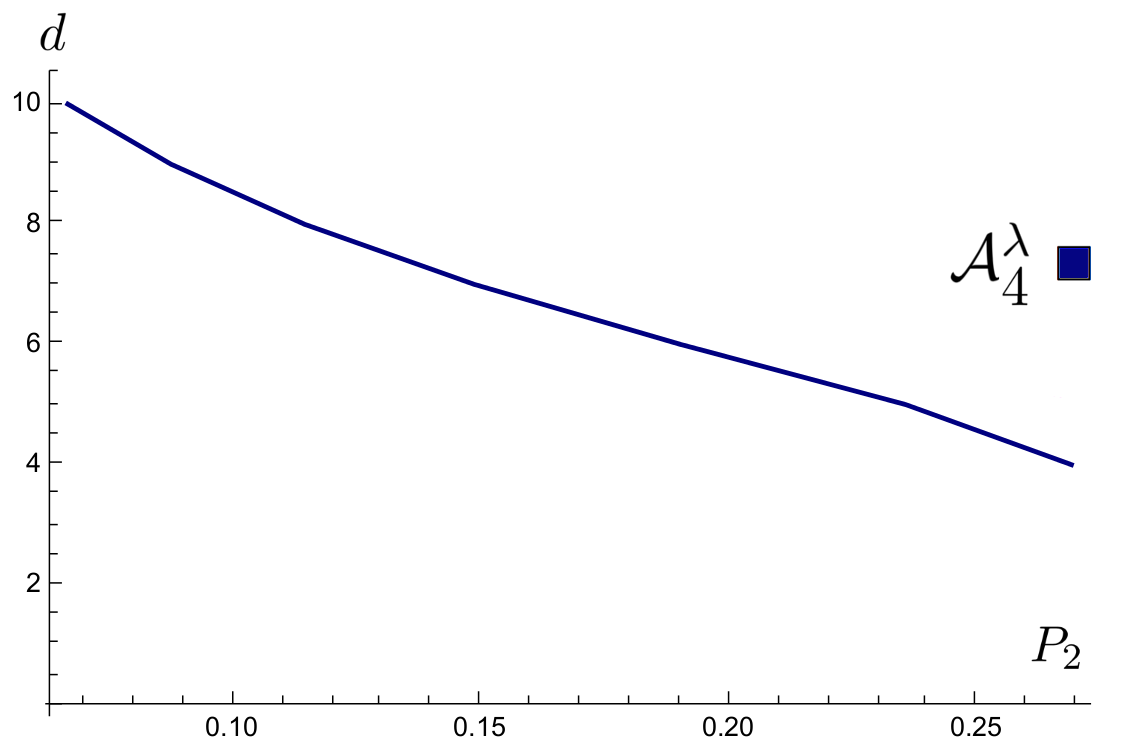}}}
\caption{The probability $P_2$ that an outbreak lasts 2 ticks for our model on $d$-cliqued networks and for the model in   \cite{Seibold}, for index node in outermost vertex.} \label{P222} 
\end{figure}

Although we have focused previously on the probability $P_2$, similar analysis leads to equivalent results for $P_i$ for higher values of $i$. 
Each probability $P_i$ will be given by a nested sum of $i-1$ terms. The outermost sum refers to the nodes directly adjacent to the index case inside the clique (which also appeared in $P_2$). The innermost sum refers to the nodes that experience infection and are at a distance of $i-1$ from the root, and thus these probabilities can be deduced through Proposition \ref{lemmalast}, as we shall see below. To this end, recall that for $\mathcal{A}_d^{\lambda}$ one has
  \begin{eqnarray} P_1&=&(1-P_{inf})^{d-1};   \nonumber\\ P_2&=&
 \sum_{k=1}^{d-2} \binom{d-2}{k} P_{inf}^{k} (1-P_{inf})^{2d-3-2k}. \nonumber
 \end{eqnarray}
 Then, for each $3\leq i \leq d-1$ one needs to separate the study of $P_i$ into $2^i$ cases depending on weather the vertex  $v_0$ in the index's clique adjacent to the body  of $\mathcal{A}_d^\lambda$ is infected or not at each step, as done for the case of $P_2$. 
 
 To illustrate the technique, we shall show here how these calculations are done for $P_3$. In this case, if the vertex $v_0$ is not infected at time 2 or 3, then the probability $P_3$ is given by  Eq. \eqref{mas11}.

When $v_0$ is infected at time 3, then the probability $P_3$ is similar to the one before, except that the only vertex in the body of $A_d^\lambda$ which is adjacent to $v_0$ need not get infected in the next step. Hence, in this case  through the term $\star$ {\bf above} one obtains a contribution of
\begin{eqnarray}
\underbrace{P_{inf}(1-P_{inf})^3}_{v_0~{\rm ~infected~ at~ time~ 3}} \cdot ~\star \label{masmas}
\end{eqnarray}
If the vertex $v_0$ was infected at time 2, then it could infect its adjacent body vertex or not, and this will give two different cases equivalent to the cases in  Eq. \eqref{p22} for $P_2$. If no one in the body is infected, then one has that the contribution to $P_3$ is as in  Eq. \eqref{masmas}. On the other hand, if $v_0$ infects someone at time 3, then that vertex needs not to infect anyone at time $4$. Since this vertex has $d-1$ adjacent nodes (e.g. see Figure \ref{yo25}), one has that in this case the contribution to  $P_3$ is given by 
\begin{eqnarray}\underbrace{P_{inf}^2(1-P_{inf})}_{v_0~{\rm  infects ~body~vertex}}~\cdot ~\overbrace{(1-P_{inf})^{d-1}}^{{\rm no~further~body~vertex~infected}}~ \cdot ~\star\nonumber \end{eqnarray}
Summing over   these four cases, one has 
 \begin{eqnarray}
P_3&=&\left[(1-P_{inf})^3+2P_{inf}(1-P_{inf})^3\right.\nonumber \\&~&\left.+P_{inf}^2(1-P_{inf})(1-P_{inf})^{d-1}\right]~ \cdot ~\star\nonumber\\
&=&\left[(1+2P_{inf}+P_{inf}^2)(1-P_{inf})^3\right.\nonumber \\&~&\left.+(1-P_{inf})^{d-3}\right]~ \cdot ~\star\nonumber
\end{eqnarray}
 
Therefore, one has that 
%\\
%&=&\left[(1+ P_{inf})^2(1-P_{inf})^3+(1-P_{inf})^{d-3}\right]~ \cdot ~\star\nonumber
 \begin{eqnarray}\label{masmas22}
P_3= \left[(1-P^2_{inf})^2+(1-P_{inf})^{d-2}\right]~ \cdot ~\star
%\left(
% \sum_{k=1}^{d-2}\binom{d-2}{k} P_{inf}^{k} (1-P_{inf})^{d-2-k} \left[\sum_{j=1}^{d-2-k}\binom{d-2-k}{j}P_{inf}^j  { (1-P_{inf})^{2(d-2-k-j)}}^{{\rm no~more~infected}}\right]\nonumber
%\right)
\end{eqnarray}
  In order to compare the probability of an outbreak lasting 2 times for a network modeled through  \cite{Seibold} and for a $d$-cliqued network, we consider the ratio between $P_2$ for the latter by the former as seen in Figure \ref{like}.  In particular, one can see that the higher the probability of infection $P_{inf}$, the higher is the probability $P_2$ for $\mathcal{A}_2^{\lambda}$ compared to the probability $P_2$ for $d$-ary trees with the same index root, and the same will apply for the  other $P_i$.

\vspace{-0.5 cm}
 \begin{widetext}
  \begin{eqnarray}
\underbrace{(1-P_{inf})^3}_{v_0~{\rm uninfected}}\overbrace{\left(
 \sum_{k=1}^{d-2}\underbrace{\binom{d-2}{k} P_{inf}^{k} (1-P_{inf})^{d-2-k}}_{{\rm only~}k~{\rm infected~at~time~2}}\underbrace{\left[\sum_{j=1}^{d-2-k}\binom{d-2-k}{j}P_{inf}^j \overbrace{ (1-P_{inf})^{2(d-2-k-j)}}^{{\rm no~more~infected}}\right]}_{j{\rm~get~infected~at~time~3}} 
\right).}^{\star}\label{mas11}
 \end{eqnarray}
  \end{widetext}
   \vspace{2 cm}

 \begin{figure}[h]
\centering
\subfigure{%
\resizebox*{8cm}{!}{\includegraphics{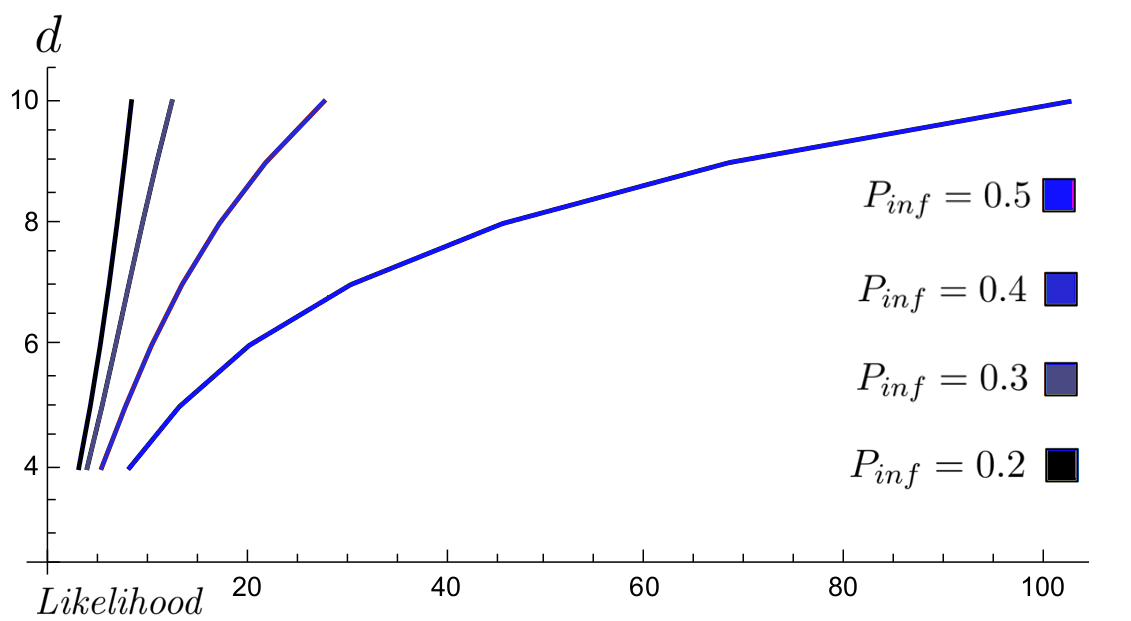}}}
\caption{The likelihood of $P_2$ depending on the degree $d$ for different values of $P_{inf}$. This is, the ratio of $P_2$ for $\mathcal{A}_d^{\lambda}$ and $P_2$ for $d$-ary trees, both with index vertex in outermost layer.} \label{like}
\end{figure}
\section{Future work} 

The work presented here introduces a new perspective to study epidemics in networks with complete sub-networks, and thus we shall mention here a few of the  many paths through which one may generalize our results.

\subsection{Safe zones in $d$-cliqued networks}\label{safe}
In order to contain an infection outbreak, one would like to understand how the removal of edges or vertices of a graph creates a safe zone of vertices which have a very low probability of an infection.  For this, we introduce the following definition: 

 \begin{definition}
 Given $1>>\varepsilon>0$, a \textbf{safe zone} $\mathcal{S}_\varepsilon^t$ of a graph $G$ is a connected subset of vertices of $G$ that have probability of infection at most $\varepsilon$ at time $t$. The {\bf size} $|\mathcal{S}_\varepsilon^t|$ is  the number of vertices in it. 
 \end{definition}

The structure of the safe zone depends on the shortest paths from $I$ to other vertices, as well as on the probability of infection $P_{inf}$, and of recovery $P_{rec}$.
The paths between vertices can be understood via  Lemma \ref{lemma10} and Proposition \ref{lemmalast}, and are closely related to the correspondence between the number of edges in $\mathcal{A}_{d}^\lambda$,   the degree $d$ and height $\lambda$, which from  Eq. \eqref{edgenumber}    can be seen in Figure \ref{holamas2}. 
We strive to find the number of  edges that can be removed in the contact network, depending on the nodes which are initially infected.

  We have seen that in many cases the outbreak duration in our networks will be quite different from the duration found in   \cite{Seibold}. In particular, the case of $\mathcal{A}_4^1$ appearing in  Figure   \ref{difference}   should be compared to the case of $\lambda=1,d=4$ of \cite[Figure 12]{Seibold} appearing in  Figure \ref{difference}.  
Moreover, when $1\leq i\leq d-1$ the probabilities $P_i$ of an outbreak lasting $i$ ticks increase quite drastically, since in  $\mathcal{A}_{d}^\lambda$ our index node $I$ is within a clique, and thus all vertices not infected in one step may still be infected in the other.  Hence, the safe zones for $\mathcal{A}_d^\lambda$ will be smaller than those   for perfect $d$-ary trees of  \cite{Seibold}.  For instance in  $ \mathcal{A}_{3}^2$ with $P_{inf}=1$ there is a safe zone with $|\mathcal{S}_\varepsilon^1|= 7$,   two with  $|\mathcal{S}_\varepsilon^2|=3$, and two with $|\mathcal{S}_\varepsilon^3|=2$.

\subsection{Epidemics  in random networks} Understanding epidemics on random networks is an open problem which has attracted attention from researchers in many areas of mathematics. For this, it would be most important to understand connectivity of the random networks under consideration, as well as their clique density.  One should note that whilst connectivity is important, the understanding of {\it safe zones} becomes much harder for connected graphs than for sparse graphs.  In particular, we expect to understand whether having many cliques preclude the existence of safe zones in random networks   
 
 \section{Conclusion}

 The importance of precise and efficient mathematical modeling of epidemics in order to analyze biological and social networks is widely accepted. As these systems are often symmetric, our results mostly focus on contact networks with symmetric clusters (cliques),   providing an enlightening setting from which to study society.   We have introduced several new mathematical objects which appear to be of importance when studying epidemic dynamics in those social groups  that have complete sub-networks, describing subgroups of people who are all related to each other. 
\smallbreak
In order to account for regular social groups where  interactions are allowed amongst the youngest members of the network, we introduced   a generalization of regular trees to {\it $d$-cliqued tree graphs} $\mathcal{A}_d^\lambda$ (see Definition \ref{defi1}).  These objects allow us to model, for example, social groups where children are connected to all their school class mates and not only their teachers. 
 \smallbreak

 In order to study infections in these new  graphs, we extended the results of    \cite{Seibold}  to our setting. We considered different ways in which an outbreak could originate, and showed that when it originated in the root of the graph the  results of  \cite{Seibold}  still hold. On the other hand, we proved  that when the  index $I$ of the graph  (the origin of the infection) is one of the youngest nodes (the red node in Figure \ref{casesroot} (c)), then its siblings will be infected immediately, while in the setting of   \cite{Seibold}, they wouldn't.

 Without the use of the  IONTW platform, we were able to study the characteristics of $d$-cliqued tree graphs  equivalent to those for $d$-ary trees of  \cite{Seibold}, and showed that in many cases safe zones are smaller in our setting.
Moreover, when the index vertex is within a network's clique and not adjacent to the body of the network, though the probabilities $P_i$ of an outbreak lasting $i$ ticks one could calculate the  expected average duration $\delta_{d}^{\lambda}$ of   the graph $\mathcal{A}_d^{\lambda}$ as the weighted average of each of the possible durations $
\delta_d^\lambda=\sum_{i=1}^{2\lambda+1}iP_i$. 
\smallbreak

Whilst in  \cite{Seibold} the model on perfect $d$-ary trees was shown to be greatly influenced by  the size of the graph when determining the duration of an outbreak, even more so than the number $d$ of branches which has minimal effect after the root is infected, we have shown that in our model the height is irrelevant when calculating $P_i$ for small values of $i$. In fact, we have described how these values depend mainly on the degree $d$ of the graph, which emphasises the difference between the two models.  

When considering  next generation models where $P_{inf} = P_{rec} = 1$, we saw that not much analysis was needed. However, interesting mathematics appeared when   implementing more general discrete models for the index case being in the youngest node, through which the different values of $P_{inf}$ influenced  the duration of the outbreak. One should note that even for this type of index, the probabilities $P_i$ for the $d$-ary trees were not given in   \cite{Seibold}, and thus our methods   help, in particular, to understand the simpler model of $d$-ary trees.  Since the distance between the index case and other vertices are a very important factor for both scenarios, we dedicated a section to give formulas for the number of nodes at each distance from the index case, which are then used to calculate each probability $P_i$ of an outbreak lasting exactly $i$ ticks or time units, both for the setting of  \cite{Seibold} as well as for ours.

Finally, by considering the ratio between $P_i$ for $d$-ary trees and for our $\mathcal{A}_d^\lambda$, we were able to quantise the different likelihoods of an outbreak lasting $P_i$ in the two different models.  In particular, as mentioned before, the higher  the probability of infection $P_{inf}$ is, the higher is the probability $P_2$ for $\mathcal{A}_d^{\lambda}$ compared to the probability $P_2$ for $d$-arey trees with the same index root, and equivalent results apply for other $P_i$. In other words, the chance of an outbreak lasting two time units is much higher (sometimes 100 times higher) for our new graphs, showing the importance of accounting for relations between multiple individuals (see Figure \ref{like}). \\

\section*{Acknowledgement(s)}

Much of the research in this paper was conducted by the second author under the supervision of the first one, as part of the MIT PRIMES-USA program. Both authors would like to thank  MIT PRIMES-USA for the opportunity to conduct this research together, and in particular Tanya Khovanova for her continued support. The authors are also thankful  to Kevin Ren, Fidel I.~Schaposnik and James A.~Unwin for their comments on the paper, and the referees of the journal for their useful corrections.

\section*{Funding}

The work of LPS is partially supported by NSF DMS-1509693, and by an Alexander von Humboldt Fellowship.

\end{document}